
\documentclass[11pt]{article} 

\usepackage[utf8]{inputenc} 


\usepackage{geometry} 
\geometry{letterpaper} 
\geometry{left=1in,right=1in,top=1in,bottom=1in} 

\usepackage{graphicx} 
\usepackage{color}

\definecolor{Red}{rgb}{1,0,0}
\definecolor{Blue}{rgb}{0,0,1}
\definecolor{Olive}{rgb}{0.41,0.55,0.13}
\definecolor{Green}{rgb}{0,1,0}
\definecolor{MGreen}{rgb}{0,0.8,0}
\definecolor{DGreen}{rgb}{0,0.55,0}
\definecolor{Yellow}{rgb}{1,1,0}
\definecolor{Cyan}{rgb}{0,1,1}
\definecolor{Magenta}{rgb}{1,0,1}
\definecolor{Orange}{rgb}{1,.5,0}
\definecolor{Violet}{rgb}{.5,0,.5}
\definecolor{Purple}{rgb}{.75,0,.25}
\definecolor{Brown}{rgb}{.75,.5,.25}
\definecolor{Grey}{rgb}{.5,.5,.5}


\usepackage{booktabs} 
\usepackage{array} 
\usepackage{paralist} 
\usepackage{verbatim} 
\usepackage{subfigure} 
\usepackage{amsmath,amssymb,amsthm,mathrsfs}
\usepackage[colorlinks]{hyperref}
\usepackage{blkarray}
\usepackage{bbm}

\usepackage{tikz}  
\usetikzlibrary{cd}

\usepackage{fancyhdr} 
\pagestyle{plain} 





\theoremstyle{plain}

\newtheorem{theorem}{Theorem}

\newtheorem{corollary}{Corollary}[section]

\newtheorem{lemma}{Lemma}[section]

\newtheorem{theorem*}{Theorem}   
\newtheorem{lemma*}{Lemma} 
\newtheorem{corollary*}{Corollary} 
\newtheorem*{remark*}{Remark}

\newtheorem{remark}{Remark}[section]

\makeatletter
\newtheorem*{rep@theorem}{\rep@title}
\newcommand{\newreptheorem}[2]{%
\newenvironment{rep#1}[1]{%
 \def\rep@title{#2 \ref{##1}}%
 \begin{rep@theorem}}%
 {\end{rep@theorem}}}
\makeatother
\newreptheorem{theorem}{Theorem} 
\newreptheorem{lemma}{Lemma}      
\newreptheorem{corollary}{Corollary} 

\newlength{\widebarargwidth}
\newlength{\widebarargheight}
\newlength{\widebarargdepth}

\theoremstyle{definition}
\newtheorem{definition}{Definition}

\def\cA{{\cal A}}

\def\cN{{\cal N}}

\newcommand{\real}{\ensuremath{\mathbb{R}}}

\newcommand{\E}{\ensuremath{\mathbb{E}}}


\bibliographystyle{plain}

\begin{document}

\begin{center}

	{\bf{\LARGE{An entropy inequality for symmetric random variables}}}

	\vspace*{.25in}

	\begin{tabular}{ccc}
		{\large{Jing Hao}} & \hspace*{.75in} & {\large{Varun Jog}} \\ 
		{\large{\texttt{jing.hao@wisc.edu}}} & & {\large{\texttt{vjog@wisc.edu}}} \vspace{.2in}
		\\
		Departments of Mathematics & \hspace{.2in} & Department of Electrical \& Computer Engineering \\
		University of Wisconsin - Madison && University of Wisconsin - Madison \\ 
	\end{tabular}

	\vspace*{.2in}

	January 2018

	\vspace*{.2in}

\end{center}

\abstract{We establish a lower bound on the entropy of weighted sums of (possibly dependent) random variables $(X_1, X_2, \dots, X_n)$ possessing a symmetric joint distribution. Our lower bound is in terms of the joint entropy of $(X_1, X_2, \dots, X_n)$. We show that for $n \geq 3$, the lower bound is tight if and only if $X_i$'s are i.i.d.\ Gaussian random variables. For $n=2$ there are numerous other cases of equality apart from i.i.d.\ Gaussians, which we completely characterize. Going beyond sums, we also present an inequality for certain linear transformations of $(X_1, \dots, X_n)$. Our primary technical contribution lies in the analysis of the equality cases, and our approach relies on the geometry and the symmetry of the problem.}

\section{Introduction}
The Entropy Power Inequality (EPI), first proposed by Shannon \cite{Sha48}, states that for any two independent $\mathbb R^n$-valued random variables $X$ and $Y$, 
\begin{equation}\label{eqn: epi}
e^{\frac{2h(X)}{n}} + e^{\frac{2h(Y)}{n}} \leq e^{\frac{2h(X+Y)}{n}},
\end{equation}
where $h(X)$ and $h(Y)$ are the differential entropies of $X$ and $Y$ respectively. Equality holds in inequality \eqref{eqn: epi} if and only if $X$ and $Y$ are Gaussian random vectors with proportional covariance matrices. An equivalent form of  inequality \eqref{eqn: epi} due to Lieb \cite{Lie02}  is also commonly used in the literature, and is stated as follows:
\begin{equation}\label{eq: lieb}
h(\sqrt \lambda X + \sqrt{1-\lambda} Y) \geq \lambda h(X) + (1-\lambda)h(Y),
\end{equation}
where $\lambda \in [0,1]$. Here, equality holds if and only if $X$ and $Y$ are Gaussians with identical covariance matrices. The EPI may be interpreted as a sharp lower bound on the entropy of sums of independent random variables in terms of their individual entropies. It has been widely used in communication theory to prove converses of coding theorems for different kinds of Gaussian channels, such as broadcast channels, wiretap channels, MIMO, etc. \cite{Ber73, LeuHel78, Oza80, Ooh98, WeiEtAl06}.

The EPI was first proved in Stam \cite{Sta59}, and this proof was later simplified in Blachman \cite{Bla65}. A variety of different proofs of the EPI have been discovered since, and we refer the reader to Rioul \cite{Rio11} for an informative and in-depth analysis of different proof strategies.  
Numerous generalizations of this inequality have been proposed over the years such as Costa's inequality for when one of summands is Gaussian \cite{Cos85, LiuEtAl10, CouEtAl17}, a generalization involving subsets random variables in Madiman and Barron \cite{MadBar07}, and a strengthened EPI using an auxiliary random variable in Courtade \cite{Cou16a}.  


In addition to communication theory, the EPI  has also found applications in probability theory for proving the central limit theorem \cite{Joh04book}. Barron \cite{Bar84} established an entropic version of the central limit theorem 
and  conjectured a certain monotonicity property of entropy, which states that entropy is monotonically increasing with respect to the number of summands in the central limit theorem:
\begin{equation}
h\left(\frac{\sum_{i=1}^n X_i}{\sqrt n} \right) \leq h\left(\frac{\sum_{i=1}^{n+1} X_i}{\sqrt{n+1}} \right).
\end{equation}
This conjecture was established in Arstein et al.~\cite{ArtEtAl04}, and simplified proofs were obtained in Madiman and Barron \cite{MadBar07}. More recently, a remarkably short and simple proof was also discovered in Courtade \cite{Cou16a}. Our work in this paper is partly motivated by a series of interesting questions and conjectures made in Ball et al. \cite{BallEtAl16} and Eskenazis et al. \cite{EskEtAl16}, which were in turn motivated by the monotonicity properties of entropy. We briefly describe the work in these papers concerning \emph{directional entropies}.

The monotonicity property of entropy may be interpreted as a result comparing \emph{directional entropies}; i.e., entropy of a random vector projected in a certain direction. Indeed, Eskenazis et al. \cite{EskEtAl16}  interpreted the monotonicity property of entropy as follows: For i.i.d.\ random variables $X_1, \dots, X_n$, entropy along the direction $(1, 1, \dots, 1)^T \frac{1}{\sqrt n}$ is larger than the entropy along $(1, 1, \dots, 1, 0)^T \frac{1}{\sqrt{n-1}}$. This led them to the natural question (Question 6 in \cite{EskEtAl16}): Along which direction is the entropy maximized, or equivalently, which direction is most Gaussian-like for the joint distribution of $(X_1, \dots, X_n)^T$? A natural guess would be that the optimal direction is $(1, 1, \dots, 1)^T \frac{1}{\sqrt n}$, however, this conjecture is not true and it follows from a counterexample constructed in Ball et al. \cite{BallEtAl16} for the case of $n=2$. Ball et al. \cite{BallEtAl16}  conjectured that for log-concave random variables, the entropy maximizing direction for $n=2$ should be $(1,1)^T/\sqrt 2$. In fact, the conjecture in Ball et al. \cite{BallEtAl16} is stronger---$h(\sqrt \lambda X_1 + \sqrt {1-\lambda} X_2)$ is a concave function of $\lambda$. Eskenazis et al. \cite{EskEtAl16} were able to prove (Question 6 in \cite{EskEtAl16}) for a special class of symmetric random variables called Gaussian mixtures. However, in general (Question 6, \cite{EskEtAl16}) and even its special case of  log-concave random variables for $n=2$ is as yet open.  

One of our contributions in this paper is to establish lower bounds on such directional entropies, for \emph{symmetric random vectors}. We call $X$ a symmetric random vector if $f_X(x) = f_X(|x|)$, where by $|x|$ we mean taking the absolute value of each coordinate in the $x$ vector. Stated informally, our result is the following:
\begin{reptheorem}{thm: main}
For a symmetric random vector $X = (X_1, \dots, X_n)^T$ and the unit vector $a = (1, \dots, 1)^T\sqrt n$, the following bound holds:

\begin{align*}
		h( a \cdot  X) = h \left( \frac{\sum_{i=1}^n  X_i}{\sqrt n}  \right) \geq  \frac{h( X)}{n} 
	\end{align*}
	\end{reptheorem}
For an arbitrary unit vector $a = (a_1, \dots, a_n)$, we may use the above result to obtain the bound 
\begin{align*}
h(a \cdot X) \geq \frac{h( X)}{n} + \log \left(n^{n/2}\prod_{i=1}^n a_i\right).
\end{align*}

Notice that unlike most entropy inequalities, our lower bound is in terms of the joint entropy $h(X)$. To our knowledge, this is the first inequality in which entropies of sums and joint entropies both make an appearance. We note that similarities between entropy inequalities for joint distributions \cite{MadTet10} and for sums  \cite{MadBar07} have been observed and explored recently \cite{MadGha17}. Interestingly, our lower bound is maximized in the direction $(1, \dots, 1) \frac{1}{\sqrt n}$, which is the conjectured direction of maximum entropy from \cite{BallEtAl16, EskEtAl16}  for log-concave random vectors.

Another notable point is that our inequality does not require $X_i$'s to be independent. If $X_i$'s are independent, Theorem \ref{thm: main} may be seen as a consequence of the regular EPI stated in Lieb's form \eqref{eq: lieb}. Another interpretation of our bound follows from the following observation: If $A$ is an orthonormal matrix with rows $a_i^T$,  then
\begin{align}\label{eq: ai}
\frac{\sum_{i=1}^n h(a^i \cdot X)}{n} \geq \frac{h(AX)}{n} = \frac{h(X)}{n}.
\end{align}
Inequality \eqref{eq: ai} implies that for any choice of an orthonormal basis, and for any random vector $X$, there is at least one  direction $a^i$ which satisfies $h(a^i \cdot X) \geq \frac{h(X)}{n}.$ Theorem \ref{thm: main} explicitly identifies $(1, \dots, 1)^T/\sqrt n$ as such a direction.

The search for EPI-like inequalities for \emph{dependent} random variables is an active area of research, although there are relatively fewer results here. Takano \cite{Tak95} derived conditions on the joint distribution of $(X,Y)$ such that the classical form of the EPI in \eqref{eqn: epi} continues to hold. Under weaker conditions on the joint distribution compared to those in Takano \cite{Tak95}, Johnson \cite{Joh04} showed that inequality \eqref{eqn: epi} continues to hold with a slight modification: $h(X)$ and $h(Y)$ are replaced by $h(X|Y)$ and $h(Y|X)$ respectively. Rioul \cite{Rio11} showed that  under stronger conditions compared to those of Takano \cite{Tak95}, the equivalent form of the EPI in inequality \eqref{eq: lieb} can be made to hold for any choice of $\lambda \in [0,1]$. 

The conditions from \cite{Tak95} and \cite{Joh04} are in terms of the score functions and Fisher informations of Gaussian perturbed random variables $X_t$ and $Y_t$, where
\begin{align*}
X_t &= X + \sqrt{f_1(t)} Z_1, \text{ and }\\
Y_t &= Y + \sqrt{f_2(t)}Z_2,
\end{align*}
where $Z_1, Z_2 \sim \cN(0,1)$ and $f_1(\cdot), f_2(\cdot)$ are positive functions that tend to $+\infty$ as $t \to +\infty$. As such, these conditions are not easily interpretable. The conditions from Rioul \cite{Rio11} are in terms of the mutual information between Gaussian perturbed versions of $X$ and $Y$ and are more interpretable, although they are not easy to verify. In contrast, we impose the easy to interpret and easy to verify (albeit strong) condition of symmetry on the distribution of $X$.
 
Going beyond directional entropies, it is natural to think about entropies of projections in subspaces of arbitrary dimensions; i.e., considering $h(AX)$ where $A$ is a $k \times n$ matrix with orthonormal rows. Analogous to (Question 6 in \cite{EskEtAl16}), one may pose the question of identifying the entropy maximizing $k$-dimensional subspace for a random vector $X$. This appears to be a very interesting and challenging problem even for specific case of Gaussian mixtures considered in Eskenazis et al. \cite{EskEtAl16}. If $X$ has independent components, Zamir and Feder's EPI for linear transformations of $X$ \cite{ZamFed93} yields lower bounds on $h(AX)$ for a matrix $A$. However, if $X$ has dependent components, no such result is known in the literature. \footnote{The second author learnt of this open problem, as well as the conjectures in Ball et al. \cite{BallEtAl16} and Eskenazis et al. \cite{EskEtAl16} through the Entropy Power Inequalities Workshop sponsored by the American Institute of Mathematics (AIM) held in May, 2017, in San Jose, CA. A list of open problems may be found at this link: \url{http://aimpl.org/entropypower/2/}} We take a step towards obtaining such a result by showing that for symmetric random vectors, it is possible to obtain lower bounds on $h(AX)$ similar to those in Theorem \ref{thm: main}. Stated informally, our result is the following:

\begin{reptheorem}{thm: kdim}
Let $A$ be an orthogonal $k \times n$ matrix with columns $a^1, \dots, a^n$, and suppose that $A$ is ``balanced"; i.e., $\|a^i \|_2= k/n $ for $1 \leq i \leq n$. Then for a symmetric random vector $X$, the following bound holds:
\begin{equation}
h(AX) \geq \frac{k}{n} h(X).
\end{equation}
\end{reptheorem}

\paragraph{Equality cases:} 
Equality holds in the classical EPI if and only if the two random vectors are Gaussian with proportional covariance matrices. We examine conditions for equality in Theorem \ref{thm: main}, and notice an interesting phenomenon where equality conditions change when dimension increases beyond $n=2$. Intuitively, this is because the symmetry assumption becomes stronger in higher dimensions. In Theorem \ref{thm: equality2}, we completely characterize all equality cases for $n=2$ by showing that $X$ has to be a $45^\circ$ rotated version of $Z$, where $Z$ has i.i.d.\ and symmetric components:
\begin{reptheorem} {thm: equality2}
If $h((X_1+X_2)/\sqrt 2) = h(X_1,X_2)/2$, then there exist i.i.d.\, symmetric random variables $Z_1$ and $Z_2$ such that 
\begin{align*}
\frac{1}{\sqrt 2}\begin{pmatrix} 1 & -1\\ 1 & 1 \end{pmatrix} \begin{pmatrix} Z_1 \\ Z_2 \end{pmatrix} = \begin{pmatrix} X_1\\X_2 \end{pmatrix}.
\end{align*}
\end{reptheorem}
For dimensions $n >2$, we show that the only condition under which equality holds in Theorem \ref{thm: main} is  when $X$ has i.i.d.\ Gaussian components. This result is the most technical part of our paper and is given by Theorem \ref{thm: equality}:

\begin{reptheorem}{thm: equality}
Let $X$ be a symmetric random vector in $\mathbb R^n$, where $n > 2$. Equality holds in Theorem \ref{thm: main} for the direction $(1, \dots, 1)^T/\sqrt n$ if and only if $X_i$ are i.i.d.\ Gaussian random variables.
\end{reptheorem}
It is interesting to note that the symmetry condition combined with the equality condition forces $X$ to have independent components, even when we allow for dependence.

The structure of our paper is as follows: The main inequality and its proof will be given in Section \ref{section: main}.  In Section \ref{sec:the_equality_cases}, we completely characterize the equality cases for $n=2$ and $n > 2$ separately.  In Section \ref{section: extensions}, we extend the inequality to $k-$dimensional projections.  All supporting lemmas used in the proof of the main results will be given in Appendix \ref{appendix: a}.


\paragraph{Notation:} 
We denote a random column vector in $\mathbb R^n$ by $X = (X_1, X_2, \dots, X_n)^T$. We will use the notation $x = (x_1, x_2, \dots, x_n)^T$ and denote the joint density $f_{X_1, \dots, X_n}(x_1, \dots, x_n)$ by $f_X(x)$. For an $\mathbb R^n$-valued random variable $X$ with a continuous and differentiable density $f_X$, the differential entropy of $X$ is given by
\begin{align*}
	h(X) = -\int_{\mathbb R^n} f_X(x) \log f_X(x) dx
\end{align*}
and its Fisher information matrix is defined as 
\begin{align*}
	[I(X)] =  \E \left[ (\nabla \log f_X(x)) (\nabla \log f_X(x))^T \right].
\end{align*}
Note that the $(i,j)$-th term is given by
\begin{align*}
	[I(X)]_{i,j} &= \E \left[\frac{\frac{\partial f_X(x)}{\partial x_i}}{f_X(x)} \cdot \frac{\frac{\partial f_X(x)}{\partial x_j}}{f_X(x)}\right]\\
	&= \int_{\real^n} \frac{\partial f_X(x)}{\partial x_i}  \frac{\partial f_X(x)}{\partial x_j} \frac{1}{f_X(x)} dx.
\end{align*}
The Fisher information of $X$ is defined as 
\begin{align*}
	I( X) &= \text{Trace}([I(X)])\\
	&= \E \left[\sum_{i=1}^n \left(\frac{\frac{\partial f_X(x)}{\partial x_i}}{f_X(x)}\right)^2\right]\\
	&=  \int_{\mathbb R^n}  \frac{\| \nabla f_X(x) \|^2_2}{f_X(x)} dx.
\end{align*}
The quantity $\rho_X(x) := \nabla \log f_X(x)$ is called the score function of $X$, and the Fisher information is also stated as $\E[\|\rho(X)\|_2^2]$.

\section{Main results}
\label{section: main}
\begin{definition}
	A random vector  $X = (X_1, X_2, \dots, X_n)^T$ on $\mathbb R^n$ with a density $f_{ X}$ is called a symmetric random vector if the following holds:
	\begin{align}
		f_{ X}(x_1, x_2, \dots, x_i, \dots, x_n) = f_{ X}(x_1, x_2, \dots, -x_i, \dots, x_n),
	\end{align}
	for all $1 \leq i \leq n$ and all $x_i \in \mathbb R$. An equivalent way to state this is $f_X(x) = f_X(|x|)$, where $|x| = (|x_1|, \dots, |x_n|).$
\end{definition}
\begin{theorem}\label{thm: main}
	Let $ X = (X_1, X_2, \dots, X_n)^T$ be a symmetric random vector on $\real^n$. Then the following inequality holds:
	\begin{align}
		h \left(\sum_{i=1}^n \frac{X_i}{\sqrt n} \right) \geq \frac{h(X)}{n} &= \frac{h(X_1, \dots, X_n)}{n}.
		\label{eqn: main}
	\end{align}
\end{theorem}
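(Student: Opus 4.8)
The plan is to reduce the statement to a pointwise Fisher information inequality and then integrate it along the heat semigroup. Write $a=\tfrac{1}{\sqrt n}(1,\dots,1)^T$ and let $\rho_X=\nabla\log f_X$ denote the score of $X$. The key claim is that $I(a\cdot X)\le \tfrac1n I(X)$ for every symmetric random vector $X$ with a smooth density and finite Fisher information.

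To prove the claim I would first observe that symmetry forces the Fisher information matrix $[I(X)]$ to be \emph{diagonal}. Fix $i\ne j$; in $[I(X)]_{ij}=\int \frac{\partial_{x_i}f_X\,\partial_{x_j}f_X}{f_X}\,dx$, view the integrand as a function of $x_i$ with the other coordinates frozen: since $f_X$ is even in $x_i$, the factor $\partial_{x_i}f_X$ is odd in $x_i$ while $\partial_{x_j}f_X$ and $1/f_X$ are even in $x_i$, so the integrand is odd in $x_i$ and its $x_i$-integral vanishes (absolute integrability being guaranteed by Cauchy--Schwarz and $I(X)<\infty$). This is the one place the symmetry hypothesis is used, and I expect it to be packaged as a lemma in Appendix~\ref{appendix: a}. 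Next, for the projection step, complete $a$ to an orthogonal matrix $A$ with first row $a$ and put $Y=AX$; then $f_Y(y)=f_X(A^Ty)$, so $\rho_Y(y)=A\,\rho_X(A^Ty)$ and in particular $(\rho_Y)_1=a\cdot\rho_X$. Because $a\cdot X=Y_1$ and the score of a marginal equals the conditional expectation of the joint score, $\rho_{a\cdot X}(w)=\E[\,a\cdot\rho_X(X)\mid a\cdot X=w\,]$, and conditional Jensen gives $I(a\cdot X)\le\E[(a\cdot\rho_X(X))^2]$. Finally, expanding $a\cdot\rho_X=\tfrac{1}{\sqrt n}\sum_i(\rho_X)_i$ and using the diagonality from the first step to cancel the cross terms yields $\E[(a\cdot\rho_X(X))^2]=\tfrac1n\sum_i [I(X)]_{ii}=\tfrac1n I(X)$, which is the claim.

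To pass from Fisher information to entropy, set $X^{(t)}=X+\sqrt t\,Z$ for $t>0$ with $Z\sim\cN(0,I_n)$ independent of $X$. Convolution with a (symmetric) Gaussian keeps $X^{(t)}$ symmetric, gives it a smooth density, and makes its Fisher information finite, so the claim applies to $X^{(t)}$; moreover $a\cdot X^{(t)}=(a\cdot X)+\sqrt t\,Z'$ with $Z'\sim\cN(0,1)$. By de Bruijn's identity, $\tfrac{d}{dt}h(X^{(t)})=\tfrac12 I(X^{(t)})$ and $\tfrac{d}{dt}h(a\cdot X^{(t)})=\tfrac12 I(a\cdot X^{(t)})$, so the claim gives $\tfrac{d}{dt}\big(n\,h(a\cdot X^{(t)})-h(X^{(t)})\big)\le 0$; hence $n\,h(a\cdot X^{(t)})-h(X^{(t)})$ is non-increasing in $t$. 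Sending $t\to\infty$ and using the Gaussian asymptotics $h(X^{(t)})=\tfrac n2\log(2\pi e t)+o(1)$ and $h(a\cdot X^{(t)})=\tfrac12\log(2\pi e t)+o(1)$, this quantity tends to $0$; therefore $n\,h(a\cdot X^{(t)})\ge h(X^{(t)})$ for every $t>0$, and letting $t\to 0^+$ recovers $h(a\cdot X)\ge \tfrac1n h(X)$.

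The conceptual heart is the short parity computation showing $[I(X)]$ is diagonal; everything else is the standard Stam-type machinery. Accordingly, the main obstacle is the analytic bookkeeping in the heat-flow step: justifying de Bruijn's identity along the flow, controlling the $t\to\infty$ asymptotics (which needs, e.g., a mild moment assumption), and verifying the continuity $h(X^{(t)})\to h(X)$ and $h(a\cdot X^{(t)})\to h(a\cdot X)$ as $t\to0^+$. I would handle these under the paper's standing smoothness/integrability assumptions on $f_X$. For $n=2$ the heat flow can be avoided entirely: with $A=\tfrac{1}{\sqrt2}\begin{pmatrix}1&1\\1&-1\end{pmatrix}$, symmetry gives $h\big((X_1-X_2)/\sqrt2\big)=h\big((X_1+X_2)/\sqrt2\big)$, and subadditivity of entropy yields $h(X)=h(AX)\le h\big((X_1+X_2)/\sqrt2\big)+h\big((X_1-X_2)/\sqrt2\big)=2\,h\big((X_1+X_2)/\sqrt2\big)$.
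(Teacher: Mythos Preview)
Your proposal is correct and follows essentially the same approach as the paper: the Fisher information step (score of a projection as a conditional expectation, Jensen/Cauchy--Schwarz, and the parity argument that symmetry makes the cross terms $[I(X)]_{ij}$ vanish) is exactly the content of the paper's Lemmas~\ref{lemma: matrix fisher}, \ref{lemma: fisher}, and \ref{lemma: cross terms}. The only cosmetic difference is in the heat-flow step---the paper uses Rioul's renormalized integral representation of entropy, which packages de~Bruijn's identity together with the $t\to\infty$ Gaussian asymptotics and thereby avoids the moment/continuity bookkeeping you flag; your direct monotonicity-plus-asymptotics argument is the same idea unpacked, and your $n=2$ shortcut via subadditivity is a nice bonus not in the paper.
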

 Following a well-established strategy for proving such inequalities, we first prove a Fisher information inequality and then use an integral representation of differential entropy to arrive at the desired inequality. Lemma \ref{lemma: fisher} gives the Fisher information inequality that we need.
\begin{proof}[Proof of Theorem \ref{thm: main}]
	Let $Y = \frac{\sum_{i=1}^n X_i}{\sqrt n}$. From Lemma \ref{lemma: fisher}, we have $I(Y) \leq I(X)/n$. Consider a random Gaussian vector $Z$ with the identity covariance matrix; i.e.,  $Z \sim \cN(0, I)$, and $Z$ is independent of $X$. For $t \geq 0$, define $X_t := X + \sqrt t Z$. Note that 
	\begin{align*}
		\frac{\sum_{i=1}^n X_t(i)}{\sqrt n} &= \frac{ \sum_{i=1}^n X_i}{\sqrt n} + \sum_{i=1}^n \frac{\sqrt t Z_i}{\sqrt n} \stackrel{d} = Y + \sqrt t Z_0 := Y_t,
	\end{align*}
	where $\stackrel{d}=$ stands for equality in distribution, and $Z_0 \sim \cN(0,1)$ is independent of $Y$. Note also that if $X$ has a symmetric joint distribution, $X_t$ also has a symmetric distribution. Hence, we may apply Lemma~\ref{lemma: fisher} to $X_t$ and $Y_t$ to conclude
	\begin{align}
		I(Y_t) \leq \frac{I(X_t)}{n}.
	\end{align}
	We now use the integral form of differential entropy in terms of Fisher information \cite{Rio11}, which implies
	\begin{align}
		h(X) &= \frac{n}{2} \log 2\pi e - \frac{1}{2} \int_{0}^\infty \left(I(X_t) - \frac{n}{1+t} \right)dt, \text{ and }\\
		h(Y) &= \frac{1}{2} \log 2 \pi e - \frac{1}{2} \int_0^\infty \left(I(Y_t) - \frac{1}{1+t} \right)dt.
	\end{align}
	This implies
	\begin{align*}
		\frac{h(X)}{n} &=  \frac{1}{2} \log 2\pi e - \frac{1}{2} \int_{0}^\infty \left(\frac{I(X_t)}{n} - \frac{1}{1+t} \right)dt\\
		&\leq  \frac{1}{2} \log 2\pi e - \frac{1}{2} \int_{0}^\infty \left(I(Y_t) - \frac{1}{1+t} \right)dt\\
		&= h(Y).
	\end{align*}
	This completes the proof.
\end{proof}

The following corollary is an immediate consequence of Theorem \ref{thm: main} and the scaling properties of the entropy function:
\begin{corollary}
	Let $ a = (a_1, \dots, a_n)^T$ be any unit vector in $\mathbb R^n$; i.e., $\| a \|_2 =1.$ Then the following inequality holds:
	\begin{align*}
		h( a \cdot  X) = h \left( \sum_{i=1}^n a_i X_i \right) \geq  \frac{h( X)}{n} + \log \left(n^{n/2}\prod_{i=1}^n a_i\right).
	\end{align*}
\end{corollary}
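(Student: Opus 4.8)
The plan is to derive this corollary purely from Theorem~\ref{thm: main} together with the elementary scaling identity for differential entropy. First I would recall that for an invertible linear map $T$ on $\mathbb R^n$ one has $h(TW) = h(W) + \log|\det T|$, and its one-dimensional special case $h(cU) = h(U) + \log|c|$ for a scalar $c \neq 0$. These are standard and follow from the change-of-variables formula for densities.

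The key step is to reduce the weighted sum $\sum_i a_i X_i$ to an unweighted sum of a suitably rescaled \emph{symmetric} random vector. Given a unit vector $a = (a_1, \dots, a_n)^T$ (we may assume all $a_i \neq 0$, since otherwise $\prod a_i = 0$ and the right-hand side is $-\infty$, making the bound vacuous), define $W = (W_1, \dots, W_n)^T$ by $W_i = \sqrt{n}\, a_i X_i$. Multiplying each coordinate by a nonzero constant preserves the symmetry property $f_W(w) = f_W(|w|)$, so $W$ is again a symmetric random vector and Theorem~\ref{thm: main} applies to it. Now observe that
\begin{align*}
\frac{1}{\sqrt n}\sum_{i=1}^n W_i = \frac{1}{\sqrt n}\sum_{i=1}^n \sqrt n\, a_i X_i = \sum_{i=1}^n a_i X_i = a \cdot X,
\end{align*}
so $h(a\cdot X) = h\!\left(\frac{1}{\sqrt n}\sum_i W_i\right) \geq \frac{1}{n} h(W)$ by Theorem~\ref{thm: main}.

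It remains to relate $h(W)$ to $h(X)$. Since $W = DX$ where $D = \operatorname{diag}(\sqrt n\, a_1, \dots, \sqrt n\, a_n)$ is an invertible diagonal matrix with $\det D = n^{n/2}\prod_{i=1}^n a_i$ (assuming, as we may after noting the bound is trivial otherwise, that all $a_i > 0$; the general case follows by replacing $a_i$ with $|a_i|$, which changes neither $h(a\cdot X)$'s lower bound nor the symmetry argument since $h(cU)=h(|c|U)$), the scaling identity gives $h(W) = h(X) + \log\!\left(n^{n/2}\prod_i a_i\right)$. Substituting this into the inequality from the previous step yields
\begin{align*}
h(a\cdot X) \geq \frac{1}{n}\left[ h(X) + \log\!\left(n^{n/2}\prod_{i=1}^n a_i\right)\right],
\end{align*}
which is slightly stronger than — wait, this gives a factor $1/n$ on the log term, not matching the stated bound; so in fact the correct reduction is to apply the scaling \emph{after} Theorem~\ref{thm: main} rather than before. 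I would instead set $Y = \sum_i a_i X_i$ and write it as $\frac{1}{\sqrt n}\sum_i \sqrt n a_i X_i$, apply Theorem~\ref{thm: main} to get $h(Y) \geq \frac{1}{n}h(\sqrt n a_1 X_1, \dots, \sqrt n a_n X_n)$, and then... the cleanest route is: apply Theorem~\ref{thm:main} directly to $X$ to bound $h\!\left(\frac{1}{\sqrt n}\sum X_i\right) \geq \frac{h(X)}{n}$, then compare $h(a\cdot X)$ with $h\!\left(\frac{1}{\sqrt n}\sum X_i\right)$ — but these are entropies of different linear functionals, not related by pure scaling. The main obstacle, then, is organizing the scaling bookkeeping correctly: the honest statement is that one applies Theorem~\ref{thm: main} to the rescaled vector $W_i = \sqrt n\, a_i X_i$ and accepts the resulting $\frac1n\log(\cdot)$ term, so either the corollary as stated has the log term outside the $1/n$ (matching a direct application to $X$ combined with a separate scaling argument I have not found) or there is an implicit normalization; I would check the intended normalization carefully, but in all variants the proof is a one-line combination of Theorem~\ref{thm: main} with $h(DX) = h(X)+\log|\det D|$ and the preservation of symmetry under coordinatewise scaling.
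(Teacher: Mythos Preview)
Your first approach---rescale to $W_i=\sqrt{n}\,a_iX_i$, note that $W$ is still symmetric, apply Theorem~\ref{thm: main} to $W$, and then use $h(DX)=h(X)+\log|\det D|$---is exactly what the paper means by ``an immediate consequence of Theorem~\ref{thm: main} and the scaling properties of the entropy function.'' That part is complete and correct.

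Your confusion at the end is unwarranted: the bound you obtained,
\[
h(a\cdot X)\;\ge\;\frac{h(X)}{n}+\frac{1}{n}\log\!\Big(n^{n/2}\prod_{i=1}^n |a_i|\Big),
\]
is not weaker than the stated corollary, it is \emph{stronger}. By AM--GM on the nonnegative numbers $a_i^2$ with $\sum_i a_i^2=1$ one has $\prod_i a_i^2\le n^{-n}$, hence $n^{n/2}\prod_i|a_i|\le 1$ and the logarithm is $\le 0$. For a nonpositive quantity $L$ and $n\ge 1$ we have $\tfrac{1}{n}L\ge L$, so your inequality immediately implies the one in the corollary. There is nothing more to organize; you had the full proof before the ``wait.'' (In fact the discrepancy suggests the printed corollary may simply be missing the factor $1/n$ on the log term; either way, what you wrote suffices.)
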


\section{Equality cases}
\label{sec:the_equality_cases}

Let $v^1 = \left( \frac{1}{\sqrt{n}}, \cdots, \frac{1}{\sqrt{n}} \right)^T$. Extend $v^1$ to an orthogonal basis $\{v^1,\cdots, v^n\}$.  Denote 
\begin{equation*}
	A=( v^1, \cdots , v^n)
\end{equation*}
Let $Z=(Z_1, \cdots, Z_n)^T:= A^TX$, then $X= A Z$.  In particular, $Z_1 = \frac{1}{\sqrt{n}} \sum^{n}_{i=1} X_i$.  Let $z = A^T x$, we have 
\[
	f_Z(z) = f_X(x) \cdot \Big|\det\left( \frac{dx}{dz}\right)\Big| = f_X(x) \cdot |A| = f_X(x),
\]
and 
\begin{equation*}
	\frac{\partial f_Z(z)}{z_1} = \sum^{n}_{i=1} \frac{1}{\sqrt{n}} \frac{\partial f_X(x)}{\partial x_i}
\end{equation*}
Denote $ \widehat{Z_i} = (Z_1, \cdots, Z_{i-1}, Z_{i+1}, Z_n)$.  We will show $Z_1$ is independent with $\widehat{Z_1}$ using Lemma \ref{lem: general indep}, which is a general result on the independence of random variables.

\begin{lemma}\label{lemma: z1}
	For equality to hold in inequality (\ref{eqn: main}), $Z_1$ must be independent of $\widehat{Z_1}$.	
	\label{lem: Z_1 indep}
\end{lemma}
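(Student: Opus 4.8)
The plan is to retrace the proof of Theorem~\ref{thm: main} and read off the equality conditions step by step. First I would note that equality in \eqref{eqn: main} forces equality in the integral comparison used in that proof: with $Y = \sum_i X_i/\sqrt n$, one has $h(Y) - h(X)/n = \tfrac12\int_0^\infty\big(\tfrac{I(X_t)}{n} - I(Y_t)\big)\,dt$, and since $X_t$ is again symmetric the integrand is nonnegative by Lemma~\ref{lemma: fisher}. Hence equality in \eqref{eqn: main} is equivalent to $I(Y_t) = I(X_t)/n$ for a.e.\ $t$, and by continuity of Fisher information along the heat semigroup, for every $t>0$.

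Next I would pin down the equality case of the Fisher information inequality itself. The bound $I(Y)\le I(X)/n$ rests on two facts: (i) the projection identity $\rho_Y(Y) = \E\big[\tfrac{1}{\sqrt n}\sum_i \rho_{X,i}(X)\,\big|\,Y\big]$ combined with conditional Jensen, giving $I(Y)\le \E\big[\big(\tfrac{1}{\sqrt n}\sum_i \rho_{X,i}(X)\big)^2\big]$; and (ii) symmetry of $f_X$, which forces the off-diagonal entries $[I(X)]_{ij}$ ($i\ne j$) to vanish, since the integrand $\partial_{x_i}f_X\,\partial_{x_j}f_X/f_X$ is odd in $x_i$, so that the right-hand side is exactly $I(X)/n$. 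Thus equality holds iff $\tfrac{1}{\sqrt n}\sum_i \rho_{X,i}(X)$ is a measurable function of $Y$ alone. Passing to the coordinates $z = A^T x$ (so $f_Z = f_X\circ A$ and $\partial_{z_1} f_Z(z) = \tfrac{1}{\sqrt n}\sum_i \partial_{x_i} f_X(Az)$), the first score component of $Z$ is $\rho_{Z,1}(Z) = \tfrac{1}{\sqrt n}\sum_i \rho_{X,i}(X)$ and $Z_1 = Y$; so the equality condition says precisely that $\partial_{z_1}\log f_Z$ is, $f_Z$-a.e., a function of $z_1$ only.

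Then I would invoke Lemma~\ref{lem: general indep}: a (regular enough) density whose logarithmic $z_1$-derivative depends on $z_1$ alone must factor as $f_Z(z) = g(z_1)\,\psi(\widehat{z_1})$, which is exactly independence of $Z_1$ and $\widehat{Z_1}$. Applying this to $X_t$ --- whose density is smooth, being a Gaussian convolution --- for each $t>0$ shows that $(Z_t)_1 \condind \widehat{(Z_t)_1}$, where $Z_t = A^T X_t = Z + \sqrt t\, Z'$ with $Z'\sim\cN(0,I)$; letting $t\to 0$ along this coupling and using that independence is preserved under distributional limits yields $Z_1 \condind \widehat{Z_1}$.

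The step I expect to be most delicate is not a single clever trick but the rigorous equality analysis of Lemma~\ref{lemma: fisher} --- in particular justifying the conditional-Jensen equality case and the measurability/regularity bookkeeping --- together with the clean passage $t\to0$ (which is why it is convenient to argue through $X_t$ and take limits rather than work at $t=0$ directly). The conceptual heart, by contrast, is the symmetry input (vanishing of the off-diagonal Fisher information) and the reduction of equality to the condition that $\partial_{z_1}\log f_Z$ depends on $z_1$ only.
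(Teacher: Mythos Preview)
Your proposal is correct and follows essentially the same route as the paper: trace equality in \eqref{eqn: main} back to equality in the Cauchy--Schwarz step \eqref{eqn_var_mean} of Lemma~\ref{lemma: fisher}, recognize this as the condition that $\partial_{z_1}\log f_Z$ depends only on $z_1$, and invoke Lemma~\ref{lem: general indep}. The paper works directly at $t=0$ (glossing over the regularity needed to apply Lemma~\ref{lem: general indep} and to deduce equality at $t=0$ from equality of the integrals), whereas you more carefully argue at each $t>0$ for the smooth density $f_{X_t}$ and then pass to the limit---this is the only difference, and it is in your favor.
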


\begin{proof}[Proof of Lemma \ref{lemma: z1}]
For equality to hold in Theorem \ref{thm: main}, the only condition we need is that inequality \eqref{eqn_var_mean} from Lemma \ref{lemma: fisher} is an equality for all $y$. The only way this can happen is if $$\frac{ \sum_{i=1}^n \frac{\partial f_X(x)}{\partial x_i}}{f_X(x)}$$ 
is constant for all $x$ on the hyperplane $x \cdot (1, \dots, 1)^T/\sqrt n = y$. An equivalent way to express this is  $ \frac{\partial f_Z(z)}{\partial z_1}/f_Z(z)$ is a function of $z_1$.  This is equivalent to 
	\begin{equation*}
		\frac{\partial^2 }{\partial z_k \partial z_1} \log f_Z(z) = 0, \: \forall k \neq 1.
	\end{equation*}
	Applying Lemma \ref{lem: general indep}, we conclude that $Z_1$ is independent of $\widehat{Z_1}$.
\end{proof}

The symmetry assumption may be combined with Lemma \ref{lemma: z1} to yield a stronger independence result:
\begin{lemma}\label{lemma: A}
Let $X$ be a symmetric random variable that achieves equality in Theorem \ref{thm: main}. Let $\cA = \{x \in \mathbb R^n \text{ such that } |x| = (1, \dots, 1)^T/\sqrt n\}$. Let $ B= (b^1, \dots, b^n)$ be any orthogonal basis such that $b^1 \in \cA$. If $Y = B^T X$, then $Y_1$ is independent of $(Y_2, \dots, Y_n)$.
\end{lemma}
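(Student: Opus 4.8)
The plan is to leverage Lemma \ref{lemma: z1} together with the symmetry hypothesis, but the catch is that Lemma \ref{lemma: z1} only directly handles the \emph{specific} basis $A$ whose first vector is $v^1 = (1,\dots,1)^T/\sqrt n$, whereas here $b^1$ is allowed to be \emph{any} vector in $\cA$, i.e.\ any of the $2^n$ sign-flips of $v^1$ (up to overall sign), and $b^2, \dots, b^n$ complete it to an arbitrary orthogonal basis. First I would observe that the symmetry of $X$ means that for any diagonal sign matrix $D = \mathrm{diag}(\pm 1, \dots, \pm 1)$, the vector $DX$ has the same distribution as $X$. Hence for $b^1 = D v^1 \in \cA$, writing $\tilde b^i = D b^i$ we have $b^1{}^T X = v^1{}^T (DX) \stackrel{d}{=} v^1{}^T X$ and more generally $B^T X = (D B)^T (DX) \stackrel{d}{=} (DB)^T X$ where $DB$ has first column $v^1$; so it suffices to treat the case $b^1 = v^1$, with $b^2, \dots, b^n$ an arbitrary completion.

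Next, reduce to the completion $(v^2, \dots, v^n)$ fixed in Section \ref{sec:the_equality_cases}. Any two orthogonal completions of $v^1$ differ by an orthogonal transformation acting on the orthogonal complement $(v^1)^\perp$: if $B = (v^1, b^2, \dots, b^n)$ and $A = (v^1, v^2, \dots, v^n)$, then $B = A R$ where $R = \begin{pmatrix} 1 & 0 \\ 0 & Q\end{pmatrix}$ with $Q \in O(n-1)$. Therefore $Y = B^T X = R^T A^T X = R^T Z$, which means $Y_1 = Z_1$ and $(Y_2, \dots, Y_n) = Q^T (Z_2, \dots, Z_n) = Q^T \widehat{Z_1}$. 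By Lemma \ref{lemma: z1}, $Z_1$ is independent of $\widehat{Z_1}$; independence is preserved under applying a fixed measurable (here linear) map to one of the two blocks, so $Y_1 = Z_1$ is independent of $Q^T \widehat{Z_1} = (Y_2, \dots, Y_n)$. Combining with the first paragraph gives the claim for every $B$ with $b^1 \in \cA$.

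The one step that needs genuine care — and which I expect to be the main obstacle — is the first paragraph's distributional identity: I should make sure that $B^T X \stackrel{d}{=} (DB)^T X$ really follows from $DX \stackrel{d}{=} X$ (it does: $(DB)^T X = B^T D^T X = B^T (D X)$ since $D^T = D$, and applying the fixed linear map $B^T$ to the equal-in-distribution vectors $DX$ and $X$ preserves equality in distribution), and that independence is a distributional property so it transfers along $\stackrel{d}{=}$. A secondary point worth stating explicitly is that $DB$ is indeed orthogonal with first column $v^1$ whenever $B$ is orthogonal with $b^1 = Dv^1$, so the reduction in the second paragraph applies to it. With these observations the proof is essentially a change-of-basis bookkeeping argument built on Lemma \ref{lemma: z1}.
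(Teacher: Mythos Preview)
Your proposal is correct and follows essentially the same sign-flip reduction as the paper: both arguments use a diagonal sign matrix $D$ to pass from $b^1 \in \cA$ to the vector $v^1 = (1,\dots,1)^T/\sqrt n$, and then invoke Lemma~\ref{lemma: z1}. The only minor difference is organizational: you additionally spell out the reduction of an arbitrary orthogonal completion of $v^1$ to the fixed one via $B = AR$, whereas the paper applies Lemma~\ref{lemma: z1} directly to $\widetilde B^T \widetilde X$, tacitly using that the conclusion of Lemma~\ref{lemma: z1} is insensitive to the choice of $v^2,\dots,v^n$.
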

\begin{proof}[Proof of Lemma \ref{lemma: A}]
Define
$$S = \{1 \leq i \leq n | b^1(i) = -1/\sqrt n\}.$$ where $b^1(i	)$ is the $i^{th}$ coordinate of $b^i$.  For $i \in S$, define $\widetilde X_i = -X_i$, and otherwise $\widetilde X_i = X_i$. Define $\widetilde B$ to be equal to the $B$ matrix, except the $i$-th row is flipped in sign if $i \in S$. Note that $\widetilde B$ is now an orthogonal matrix with its first column being $(1, \dots, 1)^T/\sqrt n$. Clearly, $Y = \widetilde B^T \widetilde X$. Using symmetry, it is clear that if $X$ satisfies Theorem \ref{thm: main} with equality, then so does $\widetilde X$. Applying Lemma \ref{lemma: z1}, we conclude $Y_1$ is independent of $(Y_2, \dots, Y_n)$.
\end{proof}


\subsection{Equality conditions for $n=2$}
\label{sub:when_n_2_}
When $n=2$, the only orthonormal basis with one vector being $(1,1)^T/\sqrt 2$ is 
$$A = \frac{1}{\sqrt 2}
\begin{pmatrix}
1 &-1\\
1 & 1
\end{pmatrix}.
$$
Define $Z = A^T X$. We first show that $Z_1$ and $Z_2$ are independent and identically distributed:
\begin{lemma}\label{lemma: z1z2}
	If $X$ is a symmetric random vector in $\mathbb R^2$ such that inequality \eqref{eqn: main} from Theorem \ref{thm: main} is satisfied with equality, then $Z_1$ and $Z_2$ are independent, identically distributed, symmetric random variables.
		\label{lem: n=2 i.i.d.}
\end{lemma}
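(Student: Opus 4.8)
The plan is to leverage Lemma \ref{lemma: A} twice, using two different orthogonal bases whose first column lies in $\cA = \{x : |x| = (1,1)^T/\sqrt 2\}$, and then combine the resulting independence and symmetry statements with the symmetry of $X$ itself. First, applying Lemma \ref{lemma: A} with $B = A$ (whose first column $(1,1)^T/\sqrt 2$ indeed lies in $\cA$) already gives that $Z_1$ is independent of $Z_2$. The remaining work is to show that $Z_1$ and $Z_2$ are identically distributed and each symmetric. For the ``identically distributed'' part, I would apply Lemma \ref{lemma: A} a second time with the basis $B' = \frac{1}{\sqrt 2}\begin{pmatrix} 1 & 1 \\ -1 & 1\end{pmatrix}$, whose first column is $(1,-1)^T/\sqrt 2 \in \cA$; this tells us that $B'^T X$ has independent coordinates as well. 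Writing out $B'^T X$ in terms of $Z = A^T X$, one finds that the coordinates of $B'^T X$ are (up to signs) $Z_2$ and $Z_1$, i.e. the roles of $Z_1$ and $Z_2$ get swapped. Combined with the symmetry of $X$ — which translates into a symmetry relating the joint law of $(Z_1, Z_2)$ to that of various sign-flips and the swap $(Z_1, Z_2) \mapsto (Z_2, Z_1)$ — this should force the marginal of $Z_1$ to equal the marginal of $Z_2$.

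For the symmetry of $Z_1$ (and hence $Z_2$), I would exploit the sign-flip symmetries of $X$ directly: the maps $(X_1, X_2) \mapsto (-X_1, X_2)$ and $(X_1, X_2) \mapsto (X_1, -X_2)$ preserve the law of $X$, and under $A^T$ these correspond to linear maps on $(Z_1, Z_2)$-space. Concretely, $(X_1, X_2)\mapsto(X_2,X_1)$ fixes $Z_1$ and flips $Z_2$, while $(X_1,X_2)\mapsto(-X_1,-X_2)$ flips both $Z_1$ and $Z_2$; composing appropriately yields that $(Z_1, Z_2) \stackrel{d}{=} (-Z_1, Z_2)$, which is exactly the statement that $Z_1$ is a symmetric random variable (and likewise $Z_2$, or this follows once we know $Z_1 \stackrel{d}{=} Z_2$). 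One has to be a little careful about which permutation/sign combinations are actually symmetries of $X$ versus which are symmetries of $Z$; since $X$ is only assumed coordinate-wise symmetric (not exchangeable), the key point is that the relevant sign flips of $Z$ arise as sign flips of $X$ precisely because the basis vectors of $A$ have all entries equal to $\pm 1/\sqrt 2$.

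The main obstacle I anticipate is the bookkeeping in the second step: correctly identifying how the coordinate sign-flips and the coordinate swap of $X$ act on $(Z_1, Z_2)$, and checking that the combination of (a) independence of $Z_1, Z_2$, (b) independence after swapping via $B'$, and (c) the induced symmetries, genuinely pins down $Z_1 \stackrel{d}{=} Z_2$ rather than just some weaker relation. A clean way to organize this is to note that the group generated by the sign flips of $X$, pushed through $A^T$, already contains the maps $(z_1,z_2)\mapsto(z_2,z_1)$ and $(z_1,z_2)\mapsto(-z_1,-z_2)$, so the law of $(Z_1,Z_2)$ is invariant under the dihedral-type group they generate; invariance under the swap plus independence immediately gives equality of marginals, and invariance under $(z_1,z_2)\mapsto(-z_1,z_2)$ (obtained by composing the swap-type and negation-type maps) gives symmetry of each marginal. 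Once $Z_1$ and $Z_2$ are i.i.d.\ and symmetric, the lemma is proved.
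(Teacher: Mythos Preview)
Your approach is essentially the same as the paper's, though considerably more elaborate than necessary. The paper's proof is three lines: independence of $Z_1,Z_2$ comes from Lemma~\ref{lemma: z1} (one application suffices), and then the coordinate sign-flips of $X$ show directly that $X_1+X_2$, $X_1-X_2$, $-X_1+X_2$, $-X_1-X_2$ all have the same distribution, i.e.\ $Z_1, Z_2, -Z_1, -Z_2$ share a common marginal law --- which is exactly ``identically distributed and symmetric.'' Your second invocation of Lemma~\ref{lemma: A} with $B'$ is redundant: it yields only the independence of $(Z_2,Z_1)$, which you already have.

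One small bookkeeping slip: the group generated on $(z_1,z_2)$-space by the sign-flips of $X$ is the Klein four-group $\{\mathrm{id},\ (z_1,z_2)\mapsto(z_2,z_1),\ (z_1,z_2)\mapsto(-z_1,-z_2),\ (z_1,z_2)\mapsto(-z_2,-z_1)\}$; it does \emph{not} contain the single-coordinate flip $(z_1,z_2)\mapsto(-z_1,z_2)$ that you claim to get by composition. This does not break your argument, however, since marginal symmetry of $Z_1$ already follows from $Z_1 \stackrel{d}{=} Z_2$ (via the swap) together with $Z_1 \stackrel{d}{=} -Z_2$ (via $(z_1,z_2)\mapsto(-z_2,-z_1)$).
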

\begin{proof}[Proof of Lemma \ref{lemma: z1z2}]
By Lemma \ref{lemma: z1}, we obtain the independence of $Z_1$ and $Z_2$. Since $X$ is symmetric, we see that $X_1+X_2$, $X_1-X_2$, $-X_1+X_2$, and $-X_1-X_2$ have the same distributions. This means that $Z_1, Z_2, -Z_1$ and $-Z_2$ have identical distributions, and this concludes the proof. 
\end{proof} 


\begin{theorem}\label{thm: equality2}
	For a symmetric random vector $X$ in $\mathbb R^2$, equality holds in inequality (\ref{eqn: main}) for Theorem \ref{thm: main} if and only if $Z_1 = \frac{X_1 + X_2}{\sqrt{2}}$ and $ Z_2 = \frac{X_1 - X_2}{\sqrt{2}}$ are independent, identically distributed symmetric random variables.
\end{theorem}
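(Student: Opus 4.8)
The plan is to prove Theorem \ref{thm: equality2} as a direct consequence of the lemmas already established, handling the two directions separately. For the forward direction, suppose equality holds in inequality \eqref{eqn: main}. Since $n=2$, the matrix $A = \frac{1}{\sqrt 2}\begin{pmatrix} 1 & -1 \\ 1 & 1\end{pmatrix}$ is the unique orthonormal basis with first column $(1,1)^T/\sqrt 2$, and $Z = A^T X$ gives exactly $Z_1 = (X_1+X_2)/\sqrt 2$ and $Z_2 = (X_1 - X_2)/\sqrt 2$. Then Lemma \ref{lemma: z1z2} immediately yields that $Z_1$ and $Z_2$ are independent, identically distributed, symmetric random variables. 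So the forward direction requires essentially no new work beyond citing Lemma \ref{lemma: z1z2}.

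For the reverse direction, suppose $Z_1 = (X_1+X_2)/\sqrt 2$ and $Z_2 = (X_1-X_2)/\sqrt 2$ are i.i.d.\ symmetric. I would first observe that $X = A Z$, so $h(X) = h(AZ) = h(Z_1, Z_2) = h(Z_1) + h(Z_2)$ since $A$ is orthogonal (Jacobian $1$) and $Z_1 \condind Z_2$. On the other hand $h\big((X_1+X_2)/\sqrt 2\big) = h(Z_1)$. Since $Z_1, Z_2$ are i.i.d., $h(Z_1) = h(Z_2)$, and hence $h(Z_1) = \tfrac12\big(h(Z_1)+h(Z_2)\big) = \tfrac12 h(X)$, which is precisely equality in \eqref{eqn: main}. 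One subtlety to address: we must check that $X = AZ$ is itself a symmetric random vector (so that Theorem \ref{thm: main} applies and the statement is not vacuous), but this follows from the symmetry of $Z$ combined with the sign-change structure of $A$ — flipping $Z_2 \mapsto -Z_2$ swaps $X_1 \leftrightarrow X_2$, and flipping one of $Z_1, Z_2$ appropriately realizes each coordinate sign change of $X$; alternatively one notes the density $f_X(x_1,x_2) = f_{Z_1}\!\big(\tfrac{x_1+x_2}{\sqrt2}\big) f_{Z_2}\!\big(\tfrac{x_1-x_2}{\sqrt2}\big)$ and each $f_{Z_i}$ is even, giving $f_X(x) = f_X(|x|)$ after a short case check.

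I expect the main (though still modest) obstacle to be the bookkeeping in the reverse direction: verifying carefully that $X$ is symmetric, and making sure the entropy identity $h(AZ) = h(Z)$ is legitimate (finiteness of entropies, validity of the change of variables). None of this is deep, but it should be stated cleanly. A clean way to organize the reverse direction is: (i) write down $f_X$ explicitly in terms of $f_{Z_1}, f_{Z_2}$; (ii) check evenness in each coordinate to confirm $X$ is symmetric; (iii) compute $h(X) = h(Z_1) + h(Z_2) = 2h(Z_1)$; (iv) conclude $h((X_1+X_2)/\sqrt2) = h(Z_1) = h(X)/2$. Combining both directions gives the ``if and only if'' and completes the proof.
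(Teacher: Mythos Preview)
Your proposal is correct and follows essentially the same approach as the paper: the forward direction is just a citation of Lemma \ref{lemma: z1z2}, and the reverse direction writes $f_X(x_1,x_2) = f_{Z}\!\big(\tfrac{x_1+x_2}{\sqrt2}\big) f_{Z}\!\big(\tfrac{x_1-x_2}{\sqrt2}\big)$, checks symmetry of $X$, and then computes $h(X)/2 = h(Z_1,Z_2)/2 = h(Z_1)$ exactly as you outline. The only difference is cosmetic --- the paper states the density formula up front rather than going through $h(AZ)=h(Z)$, and it does not dwell on the finiteness caveats you mention.
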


\begin{proof}
	The ``only if" part was established in Lemma \ref{lemma: z1z2}, so we need to check the ``if" part. If $Z_1 = \frac{X_1 + X_2}{\sqrt{2}}$ and $Z_2 = \frac{X_1 - X_2}{\sqrt{2}}$ are i.i.d. according to a symmetric distribution $f_Z(\cdot)$, then 
	\begin{equation*}
		f_X(x_1, x_2) = f_{Z}\left( \frac{x_1 + x_2}{\sqrt{2}}\right) f_{Z}\left( \frac{x_1- x_2}{\sqrt{2}}\right).
	\end{equation*}
	It is easy to check that $X$ is indeed a symmetric random variable. Furthermore, we may also check that inequality \eqref{eqn: main} from Theorem \ref{thm: main} holds with equality:
	\begin{align*}
		\frac{h(X_1,X_2)}{2} = \frac{h(Z_1, Z_2)}{2} = \frac{h(Z_1) + h(Z_2)}{2} = h(Z_1) = h\left( \frac{X_1 + X_2}{\sqrt{2}}\right).
	\end{align*}
	This completes the proof.
\end{proof}

\subsection{Equality conditions for $n \geq 3$}

\begin{theorem}\label{thm: equality}
Let $X = (X_1, \dots, X_n)^T$ be a symmetric $\mathbb R^n$-valued random vector. Then equality holds in Theorem \ref{thm: main} if and only if $X_i$'s are i.i.d.\ 0-mean Gaussian random variables.
\end{theorem}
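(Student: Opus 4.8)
The plan is to show that equality in Theorem~\ref{thm: main} forces the characteristic function $\phi(t):=\E[e^{i\langle t,X\rangle}]$ to equal $e^{-\sigma^2\|t\|^2/2}$ for some $\sigma>0$, which is precisely the statement that $X\sim\cN(0,\sigma^2 I_n)$, i.e.\ the $X_i$ are i.i.d.\ centered Gaussians. The converse is immediate, since i.i.d.\ centered Gaussians achieve equality (directly, or via Lieb's form \eqref{eq: lieb} of the EPI). Let $\cA=\{x:|x|=(1,\dots,1)^T/\sqrt n\}$ be the set of ``corner'' directions from Lemma~\ref{lemma: A}, each a unit vector. For $b\in\cA$, Lemma~\ref{lemma: A} gives that $b\cdot X$ is independent of $P_b X$, where $P_b$ is the orthogonal projection onto $b^\perp$; writing $X=(b\cdot X)\,b+P_b X$ and pairing $t$ against this decomposition yields the factorization $\phi(t)=\phi_{b\cdot X}(\langle t,b\rangle)\,\phi_{P_b X}(P_b t)$ for every $b\in\cA$ and every $t\in\mathbb R^n$.

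The first and most delicate step is to prove that $b\cdot X$ is a centered Gaussian for every $b\in\cA$. Fix a corner $b$ and let $\hat b\in\cA$ agree with $b$ except for one flipped coordinate, so that $\langle b,\hat b\rangle=(n-2)/n$, which lies in $(0,1)$ precisely because $n>2$ --- this is where the hypothesis $n\ge 3$ is used. In the $2$-plane $\mathrm{span}(b,\hat b)$ pick unit vectors $c\perp b$ and $c'\perp\hat b$. Applying Lemma~\ref{lemma: A} with first basis vector $b$, resp.\ $\hat b$, gives $b\cdot X\condind c\cdot X$ and $\hat b\cdot X\condind c'\cdot X$, while $\hat b\cdot X$ and $c'\cdot X$ are linear forms in the independent pair $(b\cdot X,\,c\cdot X)$ whose coefficients are $\pm\cos\theta,\pm\sin\theta$ with $\theta=\arccos((n-2)/n)\in(0,\pi/2)$, hence all nonzero. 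The Darmois--Skitovich theorem then forces $b\cdot X$ and $c\cdot X$ to be Gaussian; by symmetry of $X$ they are centered, and since $b\cdot X\stackrel{d}{=}b'\cdot X$ for any two $b,b'\in\cA$ (they differ by a coordinate sign change), these laws have a common variance $\sigma^2$. In particular every $X_i$, being a fixed linear combination of two such Gaussians, has finite variance, so $\phi$ is twice continuously differentiable on $\mathbb R^n$.

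Next I would convert the factorizations into differential equations. Fix $b\in\cA$. Since $t\mapsto\phi_{P_b X}(P_b t)$ is invariant under $t\mapsto t+sb$ and $\phi_{b\cdot X}(s)=e^{-\sigma^2 s^2/2}$ is smooth, applying the directional second derivative $\partial_b^2$ to the factorization kills the $P_bX$ factor and gives $\partial_b^2\phi(t)=(\sigma^4\langle t,b\rangle^2-\sigma^2)\,\phi(t)$ for all $t$. Restricting to a line $t\mapsto t_0+sb$, the bounded function $g(s):=\phi(t_0+sb)$ solves the second-order linear ODE $g''=(\sigma^4(\langle t_0,b\rangle+s)^2-\sigma^2)\,g$; one fundamental solution is the Gaussian $e^{-\sigma^2(\langle t_0,b\rangle+s)^2/2}$ and the other is unbounded, so $|\phi|\le 1$ forces $\phi(t_0+sb)=\phi(t_0)\,e^{-\frac{\sigma^2}{2}(s^2+2\langle t_0,b\rangle s)}$. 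Now choose corners $b_1,\dots,b_n\in\cA$ spanning $\mathbb R^n$ (for instance $(1,\dots,1)^T/\sqrt n$ together with the vectors obtained from it by flipping the $i$-th coordinate, $i=2,\dots,n$); starting from $\phi(0)=1$ and applying the last identity repeatedly, an easy induction gives $\phi(\sum_i s_i b_i)=e^{-\frac{\sigma^2}{2}\|\sum_i s_i b_i\|^2}$, hence $\phi(t)=e^{-\sigma^2\|t\|^2/2}$ on all of $\mathbb R^n$. Thus $X\sim\cN(0,\sigma^2 I_n)$, which completes the proof.

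The crux --- and the step I expect to be hardest --- is the second paragraph: extracting honest Gaussianity of the one-dimensional marginals $b\cdot X$ from the collection of individually weak conditional-independence statements produced by Lemma~\ref{lemma: A}. The care needed there lies in choosing a pair of corners whose angle makes the Darmois--Skitovich non-degeneracy conditions hold for every $n\ge 3$, and in handling regularity and positivity subtleties: the characteristic functions involved need not be positive away from the origin, which is why it is cleaner to run the final ODE argument directly on $\phi$ rather than on $\log\phi$, and to deduce finiteness of second moments before invoking smoothness. (If one prefers to avoid citing Darmois--Skitovich, the same conclusion can be obtained by a direct argument on the densities: independence in the two rotated frames turns into a functional equation whose twice-differentiation forces $\log f_{b\cdot X}$ to be quadratic.) Once all corner marginals are known to be Gaussian with a common variance, the remainder is a rigidity computation.
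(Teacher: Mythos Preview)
Your proof is correct and shares its opening move with the paper's: both pick two non-orthogonal corners in $\cA$ (the paper's construction gives inner product $(n-4)/n$ and requires a separate patch at $n=4$; your adjacent-corner choice gives $(n-2)/n$ uniformly for all $n\ge 3$), apply Lemma~\ref{lemma: A} twice to obtain two independent pairs related by a nontrivial planar rotation, and then invoke a Bernstein/Darmois--Skitovich/Ghurye--Olkin characterization to conclude $b\cdot X\sim\cN(0,\sigma^2)$ for every $b\in\cA$. The rigidity step is where the approaches diverge. The paper runs an induction over Gram--Schmidt bases $A_1,\dots,A_n$ built from a spanning family of corners, at each stage invoking an independence lemma (Lemma~\ref{lemma: independence}) and a Cram\'er-type lemma (Lemma~\ref{lemma: gaussian}) to peel off one more independent $\cN(0,\sigma^2)$ coordinate of $Z^1$. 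You instead exploit the factorization $\phi(t)=e^{-\sigma^2\langle t,b\rangle^2/2}\,\phi_{P_bX}(P_bt)$ to derive the translation identity $\phi(t_0+sb)=\phi(t_0)\,e^{-\frac{\sigma^2}{2}(\|t_0+sb\|^2-\|t_0\|^2)}$ along every corner direction, and then telescope over a spanning set of corners to pin down $\phi$ globally as a spherical Gaussian. This is shorter and sidesteps both auxiliary lemmas. Two minor remarks: your ODE detour is unnecessary, since the factorization already yields the translation identity directly (and the formula holds trivially when $\phi(t_0)=0$, because then $\phi_{P_bX}(P_bt_0)=0$); and the claim that each $X_i$ is ``a fixed linear combination of two such Gaussians'' should refer to two \emph{corner} projections $b\cdot X$ and $\hat b\cdot X$ with $b,\hat b\in\cA$ differing only in coordinate $i$, not to $b\cdot X$ and $c\cdot X$ for your particular $c\notin\cA$.
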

\begin{proof}
Let $\cA$ be the set of all unit vectors where each coordinate is $\frac{\pm 1}{\sqrt n}$, that is
\begin{align*}
\cA = \{x \in \mathbb R^n \text{ such that } |x| = (1, \dots, 1)^T/\sqrt n\}.
\end{align*}
From $\cA$, choose any $n$ vectors $v^1, v^2, \dots, v^n$ such that $\text{span}(v_{1:n}) = \mathbb R^n$. Here, the notation $v_{1:n}$ is shorthand for the set $\{v^1, \dots, v^n\}$. Furthermore, choose $v^1$ and $v^2$ such that $v^1 \not \perp v^2$. Note that it is always possible to make such a choice for $n>2$, but not for $n=2$.  For example, we can let 
 \[
 v^i(j)= \begin{cases}
 1/\sqrt{n} & \text{if } i \neq j \\
 -1/\sqrt{n} & \text{if } i = j
 \end{cases}
 \]
 where $v^i(j)$ is the $j^{th}$ coordinate of $v^i$.  Let $M$ be the matrix with the $i^{th}$ column as $v^i$, then $M = (N - 2I)/\sqrt{n}$ where $N$ is the $n$ by $n$ matrix with all entries equal to $1$.  Clearly, $N$ satisfy
 \[
 N^2 = n N
 \]
and
 \[
 M (N + (2-n)I) = \frac{1}{\sqrt{n}} (N - 2I)(N + (2-n)I) = \frac{-4+2n}{\sqrt{n}}I
 \]
 Since $n \neq 2$, $M$ is invertible, and so $v_1, \cdots, v_n$ are linearly independent. $v_1 \cdot v_2 = \frac{n-4}{n}$, therefore $v_1 \not\perp v_2$ when $n\neq4$.  For $n=4$, change $v^1$ to $\frac{1}{\sqrt{n}}[1,1,1,1]$, explicit computations can show they satisfy our conditions.
 
 By Lemma \ref{lemma: A} we know that $X \cdot v^i$ is independent of $\text{Proj}(X, (v^i)^\perp)$. Here, $\text{Proj}(X, (v^i)^\perp)$ stands for the projection of $X$ on to the $(n-1)$-dimensional subspace that is orthogonal to $v^i$. We first construct an orthonormal basis by applying the Gram-Schmidt procedure to $v^1, \dots, v^n$ in that order. Denote this basis by $A_1 = (a_{11}, a_{12}, \dots, a_{1n}$), where each $a_{1j}$ is a vector in $\mathbb R^n$. Observe that since we used the Gram-Schmidt procedure, we must have  $\text{span}(a_{11:1j}) = \text{span}(v_{1:j})$ for all $j$. In particular, we have $a_{11} = v^1$. 

In a similar manner, for each $i \ge 2 $ denote by $A_i = (a_{i1}, a_{i2}, \dots, a_{in})$ the orthogonal basis obtained by using the Gram-Schmidt procedure on the permutation with $v^i$ used first; i.e., on $v^i, v^1, \dots, v^n$. Note that as for $i=1$, we have that $a_{i1} = v^i$. We also have that the last $n-i$ columns of $A_1$ and $A_i$ are identical, because 
$$\text{span}(a_{11:1i}) = \text{span}(a_{i1:ii}) = \text{span}(v_{1:i}).$$ 
This is illustrated in the below diagram:

\[
\begin{tikzcd}
{[v_1,v_2,\cdots,v_n]} \rar{G-S} & A_1 := [a_{11}=v_1, a_{12}, \cdots,a_{1n}] \ar{d}{\text{carry over last }n-2 }\\
  {[v_2, v_1, v_3, \cdots,v_n]} \rar{G-S} & A_2:=[a_{21}=v_2,  a_{22},a_{23} = a_{13}, \cdots, a_{2n}=a_{1n}]
\ar{d}{\text{carry over last }n-3 }\\
{[v_3, v_1,v_2 ,v_4, \cdots,v_n]} \rar{G-S} & A_3:=[a_{31}=v_3,  a_{32},a_{33},a_{34} = a_{24}, \cdots, a_{3n}=a_{2n}] \\
\vdots & \vdots  \\
{[v_n, v_1, \cdots,v_{n-1}]} \rar{G-S} & A_n:=[a_{n1}=v_n,  a_{n2}, \cdots, a_{nn}]
\end{tikzcd}
\]

We now define new random variables which are  the projections of $X$ on to the basis given by the bases $A_i$'s. For $1 \leq i \leq n$, define
$$Z^i = A_i^T X.$$ Denote $Z^i(j)$ as the $j^{th}$ components of $Z^i$.
Our strategy is to show that the components of $Z^1$ are independent and Gaussian with the same variance. Note that this implies that $X$ is a spherically symmetric Gaussian, and concludes the proof of Theorem \ref{thm: equality}. 
 
Let $R_i$ be a rotation matrix such that $A_i = A_1 R_i$, for $1 \leq i \leq n$. Since the last $n-i$ columns of $A_i$ and $A_1$ are identical, we have that 
\begin{align*}
R_i = \begin{pmatrix}
&\widehat R_i &0_{i \times n-i}\\
&0_{n-i \times i} & I_{n-i \times n-i}
\end{pmatrix}
\end{align*}
for some $i \times i$ rotation matrix $\widehat R_i$. We can express $Z^i$ in terms of $Z^1$ via the relation
 \begin{align*}
 Z^i &= A_i^T X\\
 & = R_i^T A_1^T X\\
 &= R_i^T Z^1.
 \end{align*} 
For $i=2$ in particular, let 
$$\widehat R_2^T = \begin{pmatrix}
\alpha_{11} &\alpha_{12}\\
\alpha_{21} & \alpha_{22}.
\end{pmatrix}.$$
Since we chose $v^1$ and $v^2$ to be non-orthogonal, all entries of $\widehat R_2^T$ are non-zero. Furthermore, we have 
\begin{align*}
\begin{pmatrix}Z^2(1)\\ Z^2(2) \end{pmatrix} = \widehat R_2^T \begin{pmatrix}  Z^1(1)\\ Z^1(2) \end{pmatrix}.
\end{align*}
Additionally, we may apply Lemma \ref{lemma: A}  to write the following independence relations:
\begin{align*}
Z^1(1) &\perp \!\!\! \perp Z^1(2), \quad \text{ and }\\
Z^2(1) &\perp \!\!\! \perp Z^2(2).
\end{align*}
This shows that $(Z^1(1), Z^1(2))$ has independent components even after rotating by a matrix $\widehat R_2$. Since $\widehat R_2^T$ has all  non-zero entries, we may use the characterization theorem of Gaussian distributions \cite{GhuOlk62} to conclude that $Z^1(1)$ and $Z^1(2)$ are normally distributed with 0 mean and identical variances. 

We will continue the proof using induction.  Assume $Z^1(1), \cdots, Z^1(k-1)$ are i.i.d. Gaussian and independent of $Z^1(k), \cdots,Z^1(n)$.  This certainly is true for $k=1$.  We are going to show that $Z^1(1), \cdots, Z^{1}(k)$ are i.i.d. Gaussian as well, and independent of $Z^1(k+1), \dots, Z^1(n)$.  Express $Z^k(1)$ as a linear combination of $Z^1(1), \cdots, Z^{1}(k)$, 
\[
Z^k(1) = \alpha_{11}Z^1(1) + \cdots + \alpha_{1,k}Z^1(k),
\]
for some $\alpha_{11}$ through $\alpha_{1k}$. By Lemma \ref{lemma: A} and our induction assumption, we have the following independence relations:
\begin{align*}
\alpha_{11} Z^1(1) + \cdots + \alpha_{1,k-1} Z^{1}(k-1) &\perp \!\!\! \perp (Z^1(k+1), \dots, Z^1(n)), \quad \text{ and }\\
\alpha_{11} Z^1(1) + \cdots + \alpha_{1,k-1} Z^{1}(k-1) + \alpha_{1,k}Z^{1}(k) &\perp \!\!\! \perp (Z^1(k+1), \dots, Z^1(n))
\end{align*}
Note that $\alpha_{1,k} \neq 0$ because $\alpha_{1,k}$ corresponds to the projection of $X$ onto $v_k$ which does not live in the span of $v_1, \cdots, v_{k-1}$. We may therefore apply Lemma \ref{lemma: independence} to conclude that 
\begin{align*}
Z^1(k) &\perp \!\!\! \perp (Z^1(k+1), \dots, Z^1(n)).
\end{align*}
Furthermore, we also know that $Z^k(1) \sim Z^1(1) = \cN(0, \sigma^2)$ because of the symmetry assumption. We also have 
$$\alpha_{11} Z^1(1) + \cdots + \alpha_{1,k-1} Z^1(k-1) \sim \cN(0, \alpha_{11}^2 \sigma^2 +\cdots + \alpha_{1,k-1}^2 \sigma^2),$$
Note that $ \sum_{i=1}^k \alpha_{1,i}^2 =1$. Applying Lemma \ref{lemma: gaussian}, we conclude that $Z^1(k) \sim \cN(0, \sigma^2)$. By induction, all the $Z^1(i)$'s are independent and identically distributed as Gaussian random variables and completes the proof. \end{proof}

\begin{remark}
Note that we use the Lemma \ref{lemma: independence} only for the case when $X$ is a Gaussian random variable, in which case $\phi_X$ has no roots.
\end{remark}

\section{Extensions}
\label{section: extensions}
The proof of Theorem \ref{thm: main} may be adapted to obtain a version of the same with $k$-dimensional projections:
\begin{theorem}\label{thm: kdim}
Let $X$ be a symmetric $\mathbb R^n$-valued random vector. Let $A$ be a $k \times n$ matrix with orthonormal rows. The matrix $A$ is assumed to be balanced; i.e. all columns of $A$ have the same $\ell_2$-norm:
\begin{align*}
\sum_{i=1}^k a_{ij}^2 = \frac{k}{n} \quad \text{ for all } 1 \leq j \leq n.
\end{align*}
Then the following bound holds for $h(AX)$:
\begin{equation}\label{eq: kdim}
h(AX) \geq \frac{k}{n}h(X).
\end{equation}
\end{theorem}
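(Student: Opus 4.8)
The plan is to follow the two–step scheme used for Theorem~\ref{thm: main}: establish the Fisher–information inequality $I(AX)\le\frac kn I(X)$ for every symmetric $X$ (with sufficiently regular density) and every balanced $A$ with orthonormal rows, and then integrate it along the heat flow. For the second step, let $Z\sim\cN(0,I_n)$ be independent of $X$ and put $X_t:=X+\sqrt t\,Z$. Since $A$ has orthonormal rows, $AZ\sim\cN(0,I_k)$, hence $AX_t\stackrel{d}{=}AX+\sqrt t\,W$ with $W\sim\cN(0,I_k)$, and $X_t$ inherits symmetry from $X$. Granting the Fisher inequality for each $X_t$, the integral representations
\begin{align*}
h(X)&=\tfrac n2\log 2\pi e-\tfrac12\int_0^\infty\Bigl(I(X_t)-\tfrac n{1+t}\Bigr)\,dt,\\
h(AX)&=\tfrac k2\log 2\pi e-\tfrac12\int_0^\infty\Bigl(I(AX_t)-\tfrac k{1+t}\Bigr)\,dt
\end{align*}
combine, exactly as in the proof of Theorem~\ref{thm: main} (multiply the first by $k/n$, compare integrands termwise using $I(AX_t)\le\frac kn I(X_t)$), to give $\frac kn h(X)\le h(AX)$.

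It remains to prove the Fisher inequality $I(AX)\le\frac kn I(X)$. First I would record the structural consequence of symmetry already underlying Lemma~\ref{lemma: fisher}: if $f_X(x)=f_X(|x|)$ then for $i\ne j$ the integrand $\partial_i f_X\,\partial_j f_X/f_X$ is an odd function of $x_i$, so $[I(X)]_{ij}=0$ and $[I(X)]=\operatorname{diag}(d_1,\dots,d_n)$ with $d_j\ge 0$ and $\sum_j d_j=I(X)$. Next, append $n-k$ orthonormal rows to $A$ to form an $n\times n$ orthogonal matrix $\widetilde A$; then $[I(\widetilde A X)]=\widetilde A\,[I(X)]\,\widetilde A^T$ (for invertible $M$ one has $[I(MX)]=M^{-T}[I(X)]M^{-1}$, and $\widetilde A^{-T}=\widetilde A$), while $AX$ is the vector of the first $k$ coordinates of $\widetilde AX$. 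The marginalization bound for Fisher information matrices — the score of a sub-vector is the conditional expectation of the corresponding block of the full score, followed by conditional Jensen in the positive-semidefinite order — gives
\[
[I(AX)]\ \preceq\ \bigl([I(\widetilde A X)]\bigr)_{1:k,\,1:k}\ =\ A\,[I(X)]\,A^T .
\]
Taking traces (monotone under $\preceq$) and using diagonality of $[I(X)]$ together with the balanced hypothesis $\sum_{i=1}^k a_{ij}^2=k/n$,
\[
I(AX)\ \le\ \operatorname{Tr}\!\bigl(A[I(X)]A^T\bigr)=\sum_{j=1}^n d_j\sum_{i=1}^k a_{ij}^2=\frac kn\sum_{j=1}^n d_j=\frac kn\,I(X),
\]
which is the desired inequality.

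I expect the marginalization bound, together with the attendant regularity bookkeeping, to be the only part needing care beyond routine verification. The change-of-variables identity for $[I(\cdot)]$ and the marginalization bound presuppose smoothness of the densities and finiteness of the Fisher informations; as in Theorem~\ref{thm: main} this is not a genuine obstacle, since one argues with the smooth perturbations $X_t$ for $t>0$ and passes to the limit $t\downarrow 0$ inside the integral representation. The one essential use of symmetry is the diagonality of $[I(X)]$: this is exactly what makes the balanced condition effective, because it turns $\operatorname{Tr}(A[I(X)]A^T)$ into $\sum_j d_j\|a^j\|_2^2$, and balancedness then forces each $\|a^j\|_2^2$ to equal $k/n$ regardless of how the weights $d_1,\dots,d_n$ are distributed. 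Without symmetry the off-diagonal entries of $[I(X)]$ survive and there is no reason for $\operatorname{Tr}(A[I(X)]A^T)$ to be controlled by $\frac kn I(X)$, which is why the hypothesis cannot simply be dropped.
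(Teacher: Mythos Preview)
Your proposal is correct and follows essentially the same route as the paper: prove the Fisher inequality $I(AX)\le\frac kn I(X)$ from symmetry plus balancedness, then integrate along the heat flow via the de~Bruijn/Rioul integral representation. The only cosmetic difference is packaging: the paper applies Lemma~\ref{lemma: matrix fisher} to write the score of $Y=AX$ as $\E[A\rho_X(X)\mid Y]$, squares, uses Cauchy--Schwarz, and then kills the cross terms via Lemma~\ref{lemma: cross terms} and reads off the $k/n$ from balancedness; you instead phrase the same two steps as ``$[I(X)]$ is diagonal'' and ``$[I(AX)]\preceq A[I(X)]A^T$'' (the marginalization bound), then take the trace. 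These are the same ingredients in matrix-theoretic dress, and your identification of where symmetry and balancedness are each used is exactly right.
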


\begin{proof}
Lemma \ref{lemma: matrix fisher} from Appendix \ref{appendix: a} gives 
\begin{align}
\frac{\nabla f_Y(y)}{f_Y(y)} = \mathbb E \left[ A \left(\frac{\nabla f_X(X)}{f_X(X)}\right)\Big | Y = y\right].
\end{align}
Taking the squared norm on both sides and using Cauchy-Schwartz inequality, we see that
\begin{align}
\| \rho(y) \|_2^2 \leq  \mathbb E \left[ \left \|A \left(\frac{\nabla f_X(X)}{f_X(X)}\right) \right \|_2^2\Big | Y = y\right].
\end{align}
Taking an expectation with respect to $Y$, we conclude
\begin{align}
I(Y) \leq \mathbb E \left[ \left \|A \left(\frac{\nabla f_X(X)}{f_X(X)}\right) \right \|_2^2\right].
\end{align}
Notice that upon expanding, the right hand side contains cross terms of the form 
$$\E \left[\frac{\frac{\partial f_X(X)}{\partial x_i}\cdot \frac{\partial f_X(X)}{\partial x_j}}{f_X(X)^2} \right],$$
which are all equal to 0 due to Lemma \ref{lemma: cross terms} in Appendix \ref{appendix: a}. Furthermore, the coefficient of each term of the form 
$$\E \left[\frac{\left(\frac{\partial f_X(X)}{\partial x_j}\right)^2}{f_X(X)^2} \right] $$
is given by $\sum_{i=1}^k a_{ij}^2$, which is $k/n$ since $A$ is assumed to be a balanced matrix. Thus, we arrive at the bound
\begin{align}
I(Y) \leq \frac{k}{n} I(X).
\end{align}
As in Theorem \ref{thm: main}, we now use the integral form of differential entropy in terms of Fisher information \cite{Rio11}, which implies
	\begin{align}
		h(X) &= \frac{n}{2} \log 2\pi e - \frac{1}{2} \int_{0}^\infty \left(I(X_t) - \frac{n}{1+t} \right)dt, \text{ and }\\
		h(Y) &= \frac{k}{2} \log 2 \pi e - \frac{1}{2} \int_0^\infty \left(I(Y_t) - \frac{k}{1+t} \right)dt.
	\end{align}
	This implies
	\begin{align*}
		\frac{kh(X)}{n} &=  \frac{k}{2} \log 2\pi e - \frac{1}{2} \int_{0}^\infty \left(\frac{kI(X_t)}{n} - \frac{k}{1+t} \right)dt\\
		&\leq  \frac{k}{2} \log 2\pi e - \frac{1}{2} \int_{0}^\infty \left(I(Y_t) - \frac{k}{1+t} \right)dt\\
		&= h(Y).
	\end{align*}
	This completes the proof.

\end{proof}



\section{Conclusion}

In this paper, we discovered a new lower bound for directional entropies of symmetric random variables. Our bounds are different from similar bounds in the literature in two key aspects. Firstly, the lower bound is in terms of the joint entropy $h(X)$  instead of being in terms of a linear combination of $h(X_i)$'s. And secondly, the lower bound holds for dependent random variables as well, as long as the joint distribution is symmetric. Our proof strategy involves proving a Fisher information inequality and then deriving a corresponding entropy inequality.  

Our main technical contribution is the analysis of the equality cases. For $n=2$, we completely characterized all possible equality cases, and showed that unlike the regular EPI, equality may hold even for non-Gaussian random variables. For dimensions great than 2, we showed that equality holds if and only if the random variables are  i.i.d. Gaussian. Although this is the same equality condition for the regular EPI, our proof techniques are novel and rely on certain independence properties of the joint distribution combined with the symmetry assumption. Lastly, we also proved a generalization that yields entropy bounds for certain projections into $k$-dimensional subspaces.  

There are a number of open problems that we would like to consider in future work. For $k$-dimensional projections, our lower bound only holds for certain ``balanced" projection matrices. It would be interesting to see if these bounds can yield bounds for projections in arbitrary $k$-dimensional subspaces as well. Analyzing equality cases for such $k$-dimensional projections also appears to be a challenging problem. It would be quite surprising if there are equality cases other that i.i.d.\ Gaussian random variables in these problems. Yet another direction to pursue would be to examine joint distributions corresponding to i.i.d.\ Gaussian mixture random variables as in Eskenazis et al. \cite{EskEtAl16}, and identify which $k$-dimensional projections have the largest entropy. The current results in \cite{EskEtAl16} hold only for 1-dimensional projections. Our analysis in this paper heavily relies on the symmetry assumption. It is easy to construct examples of non-symmetric random variables that do not satisfy the bounds in this paper. It would be interesting to see we can establish similar bounds for other classes of joint distributions, such as those corresponding to certain symmetric graphical models.

\begin{appendix}
\section{Proofs of lemmas}\label{appendix: a}
\begin{lemma}\label{lemma: matrix fisher}
Let $X$ be an $\mathbb R^n$-valued random variable. Let $A$ be any $k \times n$ orthogonal matrix, and let $Y = AX$. Then we have the equality
\begin{equation}
\frac{\nabla f_Y(y)}{f_Y(y)} = \mathbb E \left[ A \left(\frac{\nabla f_X(X)}{f_X(X)}\right)\Big | Y = y\right].
\end{equation}
\end{lemma}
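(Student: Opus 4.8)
The plan is to complete $A$ to a full orthogonal matrix, transport $X$ through it, and then read off the claimed identity as a statement about marginalizing a density and differentiating under the integral sign. This is the standard ``the score of a linear image is a conditional expectation of the pushed-forward score'' computation, made slightly more general here by allowing $k < n$.

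First I would pick an $(n-k)\times n$ matrix $C$ whose rows extend the orthonormal rows of $A$ to an orthonormal basis of $\mathbb R^n$, and set $B = \begin{pmatrix} A \\ C \end{pmatrix}$, an $n\times n$ orthogonal matrix. Let $W := BX$ and write $W = (Y,V)$, where $Y = AX$ is the block of the first $k$ coordinates and $V := CX$ the last $n-k$. Since $|\det B| = 1$, the change-of-variables formula gives $f_W(w) = f_X(B^T w)$, and the chain rule gives $\nabla_w f_W(w) = B\,(\nabla f_X)(B^T w)$; in particular the first $k$ coordinates of $\nabla_w f_W(w)$ are exactly $A\,(\nabla f_X)(B^T w)$, because the first $k$ rows of $B$ are the rows of $A$.

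Next I would write the density of $Y$ as a marginal, $f_Y(y) = \int_{\mathbb R^{n-k}} f_W(y,v)\,dv$, and differentiate in $y$ under the integral sign. Taking only the first $k$ components of $\nabla_w f_W$ yields $\nabla f_Y(y) = \int_{\mathbb R^{n-k}} A\,(\nabla f_X)(B^T(y,v))\,dv$. Dividing by $f_Y(y)$ and multiplying and dividing the integrand by $f_X(B^T(y,v))$, the factor $f_X(B^T(y,v))/f_Y(y)$ is precisely the conditional density $f_{V\mid Y}(v\mid y)$ (again because $|\det B| = 1$). Since, conditioned on $Y = y$, we have $X = B^T(y,V)$, the resulting integral is exactly $\E\big[A\,(\nabla f_X(X)/f_X(X)) \,\big|\, Y = y\big] = \E[A\,\rho_X(X)\mid Y=y]$, which is the assertion.

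I expect the only genuinely delicate point to be the justification for differentiating under the integral sign — and, relatedly, checking that $f_Y$ inherits enough regularity from $f_X$ for $\nabla f_Y$ to make sense pointwise; under the standing smoothness and integrability hypotheses on $f_X$ this is routine (e.g.\ via dominated convergence), but it is the step that actually uses those hypotheses. The orthogonal completion, the unit-Jacobian bookkeeping, and the identification of $f_{V\mid Y}$ are all mechanical.
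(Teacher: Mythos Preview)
Your proposal is correct and follows essentially the same approach as the paper: both complete $A$ to a full orthogonal matrix, write $f_Y$ as a marginal over the complementary coordinates, differentiate under the integral, and identify the ratio $f_X/f_Y$ as the relevant conditional density. Your version is marginally cleaner in that you name the auxiliary variable $V=CX$ and you flag the differentiation-under-the-integral step as the only nontrivial point, which the paper glosses over.
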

\begin{proof}[Proof of Lemma \ref{lemma: matrix fisher}]
Let the rows of $A$ be $A_1^T, \dots, A_k^T$. By finding orthogonal rows $A_{k+1}^T, \dots, A_n^T$, we extend the set of rows of $A$ to a complete orthogonal basis. For a point $y = (y_1, \dots, y_k)^T$, we have that $y_0 = A_1y_1+ \dots + A_k y_k$ is vector satisfying $Ay_0 = y$. Thus, we may write $f_Y(y)$ as
\begin{align}\label{eq: fy}
f_Y(y) = \int_{t} f_X(y_0+A_{k+1}t_1 + \dots + A_n t_{n-k})dt,
\end{align}
where $t = (t_1, \dots, t_{n-k})^T$. Define
$y_t = y_0+A_{k+1}t_1 + \dots + A_n t_{n-k}$. Taking the gradient with respect to $y$ on both sides of equation \eqref{eq: fy}, it is easy to check that
\begin{align}\label{eq: fy2}
\nabla f_Y(y) &= \int_t A \nabla f_X(y_t) dt.
\end{align}
Dividing both sides of equation \eqref{eq: fy2} by $f_Y(y)$, 
\begin{align}
\frac{\nabla f_Y(y)}{f_Y(y)} = \int_t A \frac{\nabla f_X(y_t)}{f_X(y_t)} \cdot \frac{f_X(y_t)}{f_Y(y_t)} dt.
\end{align}
Noting that $f_{X|Y}(y_t | y) = \frac{f_X(y_t)}{f_Y(y_t)}$, we conclude that
\begin{align}
\frac{\nabla f_Y(y)}{f_Y(y)} &= \int_t A \frac{\nabla f_X(y_t)}{f_X(y_t)}f_{X|Y}(y_t | y)  dt\\
&=  \mathbb E \left[ A \left(\frac{\nabla f_X(X)}{f_X(X)}\right)\Big | Y = y\right].
\end{align}
\end{proof}
\begin{lemma}\label{lemma: fisher}
Let $X = (X_1, X_2, \dots, X_n)$ be a symmetric $\mathbb R^n$-valued random vector, and let  $Y = \frac{\sum_{i=1}^n X_i}{\sqrt n}$. Then we have
\begin{align}
I(Y) &\leq \frac{I(X)}{n}.
\end{align}
\end{lemma}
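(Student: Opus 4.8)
The plan is to use the conditional-expectation representation of the score of a linear image, which is exactly Lemma \ref{lemma: matrix fisher} applied with the $1 \times n$ matrix $a = (1/\sqrt n, \dots, 1/\sqrt n)$. Writing $Y = a\cdot X = \sum_i X_i/\sqrt n$, that lemma gives
\begin{align*}
\rho_Y(y) = \frac{f_Y'(y)}{f_Y(y)} = \mathbb E\left[\frac{1}{\sqrt n}\sum_{i=1}^n \frac{\partial_i f_X(X)}{f_X(X)}\;\Big|\; Y = y\right].
\end{align*}
Squaring and applying Jensen's (conditional Cauchy--Schwarz) inequality yields $\rho_Y(y)^2 \le \mathbb E\big[(\tfrac{1}{\sqrt n}\sum_i \partial_i\log f_X(X))^2 \mid Y=y\big]$; taking expectation over $Y$ gives $I(Y) \le \tfrac1n\,\mathbb E\big[(\sum_i \partial_i\log f_X(X))^2\big]$. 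Expanding the square produces $n$ diagonal terms $\mathbb E[(\partial_i\log f_X)^2]$, whose sum is exactly $I(X)$, plus $n(n-1)$ cross terms $\mathbb E[\partial_i\log f_X\cdot\partial_j\log f_X]$ with $i\neq j$. So the whole proof reduces to showing that every such cross term vanishes.

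First I would record the inequality $I(Y)\le \tfrac1n\big(I(X) + \sum_{i\ne j}\mathbb E[\partial_i\log f_X(X)\,\partial_j\log f_X(X)]\big)$ and note this is precisely $[I(X)]_{ij}$ in the notation of the paper. The heart of the matter is the claim that $[I(X)]_{ij}=0$ for $i\ne j$ whenever $X$ is symmetric; this is the cross-terms lemma (Lemma \ref{lemma: cross terms}) invoked in the proof of Theorem \ref{thm: kdim}, and I would either cite it or reprove it inline. The argument for it is a parity/change-of-variables computation: $[I(X)]_{ij} = \int \partial_i f_X(x)\,\partial_j f_X(x)/f_X(x)\,dx$, and under the reflection $x_i \mapsto -x_i$ (fixing all other coordinates), symmetry gives $f_X(x)$ invariant, $\partial_j f_X$ invariant, but $\partial_i f_X$ odd, so the integrand is odd in $x_i$ and the integral over $\mathbb R$ in the $x_i$-variable is zero. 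One must check the integrand is absolutely integrable so Fubini/Tonelli applies and the cancellation is legitimate; this is where regularity assumptions on $f_X$ (continuity, differentiability, finiteness of Fisher information) enter.

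Having established $I(Y)\le I(X)/n$, the proof is complete, since that is exactly the statement of Lemma \ref{lemma: fisher}.

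I expect the main obstacle to be the rigorous justification that the cross terms vanish — specifically handling the integrability needed to split the multidimensional integral and exploit the odd symmetry in a single coordinate, and making sure boundary terms (which would appear if one integrated by parts instead) genuinely vanish. The rest — the Cauchy--Schwarz step and the bookkeeping of which coefficient multiplies which term — is routine. An alternative route that avoids Lemma \ref{lemma: matrix fisher} is the classical Blachman-type argument: write $\rho_Y(y) = \mathbb E[\tfrac{1}{\sqrt n}\sum_i \rho_i(X)\mid Y=y]$ directly via the convolution structure and then optimize, but since the paper already has Lemma \ref{lemma: matrix fisher} available, the conditional-expectation approach above is cleanest.
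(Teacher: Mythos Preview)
Your proposal is correct and follows essentially the same route as the paper's own proof: apply Lemma~\ref{lemma: matrix fisher} with the row vector $(1/\sqrt n,\dots,1/\sqrt n)$, use conditional Cauchy--Schwarz and take expectations, then expand the square and kill the cross terms via the odd-in-$x_i$ symmetry argument (which is exactly Lemma~\ref{lemma: cross terms}). The only difference is cosmetic---the paper cites Lemma~\ref{lemma: cross terms} rather than reproving it inline.
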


\begin{proof}[Proof of Lemma \ref{lemma: fisher}]
		%

Lemma \ref{lemma: matrix fisher} from Appendix \ref{appendix: a} gives that 
\begin{align}
\frac{\nabla f_Y(y)}{f_Y(y)} = \mathbb E \left[ \frac{1}{\sqrt n}  \frac{\sum_{i=1}^n\frac{\partial f_X(X)}{\partial x_i}}{f_X(x)} \Bigg | Y = y\right].
\end{align}
Using Cauchy-Schwartz inequality, we have
	\begin{equation}
		\rho(y)^2 \le \E \left[ \left(\frac{1}{\sqrt n} \frac{ \sum_{i=1}^n \frac{\partial f_X(X)}{\partial x_i}}{f_X(X)} \right)^2\Bigg | Y = y \right].
		\label{eqn_var_mean}
	\end{equation}
	Averaging with respect to $y$,
	\begin{align}
		I(Y) &= \E[\rho(y)^2] \\
		&\le \E \left[  \left(\frac{1}{\sqrt n} \frac{ \sum_{i=1}^n \frac{\partial f_X(X)}{\partial x_i}}{f_X(X)} \right)^2 \right] \\
		&= \E\left[ \sum^{n}_{i=1} \frac{1}{n} \left(\frac{\frac{\partial f_X(X)}{\partial x_i}}{f_X(X)} \right)^2 + 2 \sum_{ 1 \le i < j \le n} \frac{1}{n} \frac{\frac{\partial f_X(X)}{\partial x_i}\cdot \frac{\partial f_X(X)}{\partial x_j}}{f_X(X)^2} \right]\\
		&= \E\left[ \sum^{n}_{i=1} \frac{1}{n} \left(\frac{\frac{\partial f_X(X)}{\partial x_i}}{f_X(X)} \right)^2\right] + \frac{2}{n} \E \left[\sum_{ 1 \le i < j \le n}\frac{\frac{\partial f_X(X)}{\partial x_i}\cdot \frac{\partial f_X(X)}{\partial x_j}}{f_X(X)^2} \right]\\
		&= \frac{I(X)}{n} + \frac{2}{n} \E \left[\sum_{ 1 \le i < j \le n}\frac{\frac{\partial f_X(X)}{\partial x_i}\cdot \frac{\partial f_X(X)}{\partial x_j}}{f_X(X)^2} \right]. \label{eqn_sec_term}
	\end{align}
Lemma \ref{lemma: cross terms} from Appendix \ref{appendix: a} shows that since $X$ is symmetric, each term of the form 
$$\E \left[\frac{\frac{\partial f_X(X)}{\partial x_i}\cdot \frac{\partial f_X(X)}{\partial x_j}}{f_X(X)^2} \right] = 0,$$
and thus the second term in expression \eqref{eqn_sec_term} vanishes, and this concludes the proof.
\end{proof}
\begin{lemma}
	Let $X=(X_1, \cdots, X_n)$ be a random vector and $f_X$ is its density function.  Define
	\begin{equation*}
		\widehat{X_i} := (X_1, \cdots, X_{i-1}, X_{i+1}, \cdots, X_n).
	\end{equation*}
	Then $X_i$ is independent of $\widehat{X_i}$ if and only if 
	\begin{equation*}
		\frac{\partial ^2}{\partial x_k \partial x_i} \log f_X(x) = 0, \: \forall k \neq i
	\end{equation*}
	\label{lem: general indep}
\end{lemma}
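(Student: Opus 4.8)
The plan is to reduce the statement to the elementary fact that a (strictly positive, $C^1$) density factorizes as a product of a function of $x_i$ and a function of $\widehat{x_i}$ precisely when $\log f_X$ is additively separable in these two groups of variables, and then to connect this separability to the vanishing of the mixed second partials $\partial_{x_k}\partial_{x_i}\log f_X$. I would begin the proof by recording the standing regularity assumption under which everything makes sense: $f_X > 0$ and continuously differentiable (so that $\log f_X$ is a well-defined $C^1$ function on all of $\mathbb R^n$, as is the case for every density to which this lemma is applied in the paper).

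For the ``only if'' direction, assume $X_i \condind \widehat{X_i}$, so $f_X(x) = f_{X_i}(x_i)\, f_{\widehat{X_i}}(\widehat{x_i})$ and hence $\log f_X(x) = \log f_{X_i}(x_i) + \log f_{\widehat{X_i}}(\widehat{x_i})$. Differentiating first in $x_i$ annihilates the second summand, and differentiating the result in $x_k$ for any $k \neq i$ annihilates what remains, giving $\partial_{x_k}\partial_{x_i}\log f_X \equiv 0$.

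For the ``if'' direction, suppose $\partial_{x_k}\partial_{x_i}\log f_X(x) = 0$ for every $k \neq i$. Reading this as $\partial_{x_k}\big(\partial_{x_i}\log f_X\big) = 0$ for all $k \neq i$, I would conclude that $g(x) := \partial_{x_i}\log f_X(x)$ depends only on the coordinate $x_i$, and write $g(x) = g(x_i)$. Fixing a base point and integrating $g$ in the $x_i$ variable then yields a decomposition $\log f_X(x) = G(x_i) + C(\widehat{x_i})$, where $G$ is an antiderivative of $g$ and $C$ collects the ``constant of integration'', which a priori may depend on $\widehat{x_i}$. Exponentiating gives $f_X(x) = e^{G(x_i)}\, e^{C(\widehat{x_i})}$; integrating out $\widehat{x_i}$ identifies $f_{X_i}(x_i) = c\, e^{G(x_i)}$ and integrating out $x_i$ identifies $f_{\widehat{X_i}}(\widehat{x_i}) = c^{-1} e^{C(\widehat{x_i})}$ with the same normalizing constant $c = \int e^{C(\widehat{x_i})}\, d\widehat{x_i}$, whence $f_X(x) = f_{X_i}(x_i)\, f_{\widehat{X_i}}(\widehat{x_i})$ and $X_i \condind \widehat{X_i}$.

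The step that requires care is the passage from ``$\partial_{x_i}\log f_X$ is a function of $x_i$ alone'' to the additive decomposition of $\log f_X$: this integration argument relies on $\log f_X$ being defined and $C^1$ on a region of product form, which is exactly why the positivity and smoothness hypothesis on the density is invoked; were the support of $f_X$ not a product set, the decomposition (and the conclusion) could fail. The remaining manipulations — differentiating the logarithm of a product and differentiating under the integral sign to recover the marginals — are routine and I would not belabor them.
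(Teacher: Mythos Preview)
Your proposal is correct and follows essentially the same route as the paper: both directions proceed by relating the factorization $f_X(x)=f_{X_i}(x_i)f_{\widehat{X_i}}(\widehat{x_i})$ to an additive decomposition of $\log f_X$, with the ``if'' direction obtained by integrating $\partial_{x_i}\log f_X$ (which depends only on $x_i$) and then normalizing the resulting factors. Your explicit remark about the positivity/$C^1$ hypothesis and product-form support is a welcome addition that the paper leaves implicit.
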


\begin{proof}
	If $X_i$ is independent of $\widehat{X_i}$, then 
	\begin{equation*}
		f_X(x) = f_{X_i}(x_i) f_{\widehat{X_i}}(\widehat{x_i}).
	\end{equation*}
	Taking the logarithm and differentiating with respect to $x_i$, we obtain
	\begin{align*}
		\frac{\partial }{\partial x_i} \log f_X(x) &= \frac{f'_{X_i}(x_i)}{f_{X_i}(x_i)}.
	\end{align*}
	Differentiating again with respect $x_k$ for $k \neq i$ gives 
	\begin{equation}
		\frac{\partial^2 }{\partial x_k \partial x_i} \log f_X(x) = 0, \: \forall k \neq i.
		\label{eqn: 2nd partial vanish}
	\end{equation}
	On the other hand, if condition (\ref{eqn: 2nd partial vanish}) is true, then $ \frac{\partial }{\partial X_i} \log f_X(x)$ is a function of $x_i$, and we can write $\log f_X(x)$ as the sum of a function on $x_i$ and a function on $\widehat{x_i}$, i.e.
	\begin{equation*}
		\log f_X(x) = g(x_i)+ h(\widehat{x_i})
	\end{equation*}
	for some function $g$ and $h$.  Then
	\begin{equation*} 
		f_X(x) = e^{g(x_i)} e^{h(\widehat{x_i})}
	\end{equation*} 
	Since $g$ and $h$ are unique up to a constant, we can normalize $g$, s.t. $\int e^{g(x_i)}dx_i = 1$.  Notice that also forces $\int e^{h(\widehat{x_i})} d\widehat x_i = 1$.  Therefore $e^{g(x_i)}$ and $e^{h(\widehat{x_i})}$ are the density functions of $X_i$ and $\widehat{x_i}$ respectively, and so 
	\begin{equation*}
		f_X(x) = f_{X_i}(x_i) f_{\widehat{X_i}}(\widehat{x_i}).
	\end{equation*}
	From this we conclude that $X_i$ is independent of $\widehat{X_i}$.
\end{proof}

\begin{lemma}\label{lemma: cross terms}
If $X$ is symmetric $\mathbb R^n$-valued random vector, then for any $1 \leq i, j \leq n$, $i \neq j$, the following holds:
\begin{align}
\mathbb E \left[ \frac{\frac{\partial f_X(X)}{\partial x_i}}{f_X(X)} \cdot \frac{\frac{\partial f_X(X)}{\partial x_j}}{f_X(X)} \right] = 0.
\end{align}
\end{lemma}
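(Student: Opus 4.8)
The statement to prove is Lemma~\ref{lemma: cross terms}: for a symmetric random vector $X$, the cross-term expectations $\E\left[\frac{\partial_{x_i} f_X(X)}{f_X(X)} \cdot \frac{\partial_{x_j} f_X(X)}{f_X(X)}\right]$ vanish for $i \neq j$.

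\textbf{Proof proposal.} The plan is to write the expectation as an integral, $\int_{\real^n} \frac{\partial_{x_i} f_X(x)\, \partial_{x_j} f_X(x)}{f_X(x)}\, dx$, and exploit the reflection symmetry $f_X(\dots, x_i, \dots) = f_X(\dots, -x_i, \dots)$ in the $i$-th coordinate. The key observation is that this symmetry makes $\partial_{x_i} f_X$ an \emph{odd} function of $x_i$ (differentiating an even function), while $f_X$ itself and $\partial_{x_j} f_X$ (for $j \neq i$) are \emph{even} functions of $x_i$. Hence the integrand is odd in the variable $x_i$, and integrating over $x_i \in \real$ first gives zero; Fubini (justified by the integrability implicit in $I(X) < \infty$, which is assumed throughout since we are in an equality/Fisher-information setting) then finishes the argument.

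The steps, in order, would be: (1) Fix $i \neq j$ and recall from symmetry that $f_X(x)$ is invariant under $x_i \mapsto -x_i$; (2) differentiate this identity in $x_i$ to get $\partial_{x_i} f_X(\dots,-x_i,\dots) = -\partial_{x_i} f_X(\dots,x_i,\dots)$, i.e.\ $\partial_{x_i} f_X$ is odd in $x_i$; (3) differentiate the symmetry identity in $x_j$ (which commutes with the reflection in $x_i$) to get that $\partial_{x_j} f_X$ is even in $x_i$; (4) conclude that $g(x) := \frac{\partial_{x_i} f_X(x)\, \partial_{x_j} f_X(x)}{f_X(x)}$ is odd under $x_i \mapsto -x_i$; (5) split the integral $\int_{\real^n} g\,dx = \int_{\real^{n-1}} \left(\int_{\real} g\, dx_i\right) d\widehat{x_i}$ and note the inner integral is zero by oddness, so the whole thing vanishes. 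A minor point to address in step (5) is the interchange of integration order, which is legitimate because $\E[\|\rho(X)\|_2^2] = I(X)$ is finite (so $\frac{(\partial_{x_i}f_X)^2}{f_X}$ and $\frac{(\partial_{x_j}f_X)^2}{f_X}$ are integrable), and the cross term is dominated via $|ab| \le \tfrac12(a^2+b^2)$ by an integrable function.

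\textbf{Main obstacle.} There is no deep obstacle here; the lemma is essentially a parity computation. The only technical care needed is the justification of Fubini's theorem and the tacit assumption that $f_X$ is smooth enough and $I(X)$ finite for the manipulations to make sense — but these are standing assumptions in the paper (the integral representation of entropy used in Theorem~\ref{thm: main} already presumes them). If one wanted to be fully rigorous about differentiability under the integral sign or about densities that vanish on sets of positive measure (where $\partial_{x_i} f_X / f_X$ is problematic), one would restrict attention to $\{f_X > 0\}$, which is itself symmetric under each coordinate reflection, so the parity argument is unaffected.
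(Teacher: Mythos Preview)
Your proposal is correct and matches the paper's own proof essentially line for line: the paper also observes that symmetry makes $\partial_{x_i} f_X$ odd and $\partial_{x_j} f_X$ even in $x_i$, so the integrand is odd under $x_i \mapsto -x_i$, and then integrates out $x_i$ first (phrased there as a conditional expectation given $X_1,\dots,X_{i-1},X_{i+1},\dots,X_n$). Your explicit Fubini justification via $|ab|\le\tfrac12(a^2+b^2)$ and finiteness of $I(X)$ is a nice touch that the paper leaves implicit.
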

\begin{proof}[Proof of Lemma \ref{lemma: cross terms}]
Since $f_X(x_1, \dots, x_i, \dots, x_n) = f_X(x_1, \dots, -x_i, \dots, x_n)$, we have 
	\begin{align}
		\frac{\partial f_X}{\partial x_i}(x_1, \dots, -x_i, \dots, x_n) &= - \frac{\partial f_X}{\partial x_i}(x_1, \dots, x_i, \dots, x_n), \text{ and } \label{eqn_sym11}\\
		\frac{\partial f_X}{\partial x_j}(x_1, \dots, -x_i, \dots, x_n) &=  \frac{\partial f_X}{\partial x_j}(x_1, \dots, x_i, \dots, x_n).
		\label{eqn_sym21}
	\end{align}
	Therefore
	\begin{align*}
		\frac{\frac{\partial f_X}{\partial x_i}(x_1, \dots, -x_i, \dots, x_n)\cdot \frac{\partial f_X}{\partial x_j}(x_1, \dots, -x_i, \dots, x_n)}{f_X^2(x_1, \dots, -x_i, \dots, x_n)}\\
		=-\frac{\frac{\partial f_X}{\partial x_i}(x_1, \dots, x_i, \dots, x_n)\cdot \frac{\partial f_X}{\partial x_j}(x_1, \dots, x_i, \dots, x_n)}{f_X^2(x_1, \dots, x_i, \dots, x_n)}.
	\end{align*}
	Since the joint density is symmetric, we have 
	\begin{equation}
		\E \left[ \frac{ \frac{\partial f_X (X)}{\partial x_i} \cdot \frac{\partial f_X(X)}{\partial x_j}}{f_X(X)^2} \Bigg | X_1, \cdots, X_{i-1}, X_{i+1}, \cdots, X_n \right] = 0
	\end{equation}
	Taking an expectation again, we conclude
	\begin{align}
		\E \left [ \frac{ \frac{\partial f_X(X)}{\partial x_i}\cdot \frac{\partial f_X(X)}{\partial x_j}}{f_X(X)^2} \right] &= \E \left[ \E \left[ \frac{ \frac{\partial f_X(X)}{\partial x_i}\cdot \frac{\partial f_X(X)}{\partial x_j}}{f_X(X)^2} \Bigg | X_1, \dots, X_{i+1},X_{i+1}, \cdots, X_n\right]\right], \\
		&=0.
		\label{eqn_cond_exp1}
	\end{align}
	This concludes the proof.
\end{proof}

\begin{lemma}\label{lemma: independence}
	Let $X$ and $Y$ be real-valued random variables, and let $Z$ be an $\mathbb R^n$-valued random variable. We have the independence relations $X \perp \!\!\! \perp (Y,Z)$, and $X+Y  \perp \!\!\! \perp Z$. Furthermore, the characteristic function of $X$, denoted by $\phi_X$, is assumed to have no zeros. Then $X,Y,Z$ are mutually independent.
\end{lemma}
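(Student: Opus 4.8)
The plan is to reduce the claim to the single missing independence relation $Y \perp \!\!\! \perp Z$. Indeed, the hypothesis $X \perp \!\!\! \perp (Y,Z)$ together with $Y \perp \!\!\! \perp Z$ forces the joint characteristic function to factor completely: for all $u, s \in \mathbb R$ and $t \in \mathbb R^n$,
\[
\E\left[e^{i u X + i s Y + i \langle t, Z\rangle}\right] = \phi_X(u)\,\E\left[e^{i s Y + i \langle t, Z\rangle}\right] = \phi_X(u)\,\phi_Y(s)\,\phi_Z(t),
\]
which is precisely mutual independence of $X$, $Y$, $Z$. So everything comes down to proving $Y \perp \!\!\! \perp Z$.

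For that I would argue purely with characteristic functions. Fix $s \in \mathbb R$ and $t \in \mathbb R^n$, and write $\psi(s,t) = \E[e^{i s Y + i \langle t, Z\rangle}]$ for the joint characteristic function of $(Y,Z)$. Evaluating the joint characteristic function of $(X,Y,Z)$ at the point $(s,s,t)$ and using $X \perp \!\!\! \perp (Y,Z)$ gives
\[
\E\left[e^{i s (X + Y) + i \langle t, Z\rangle}\right] = \phi_X(s)\,\psi(s,t).
\]
On the other hand, the hypothesis $X + Y \perp \!\!\! \perp Z$ gives
\[
\E\left[e^{i s (X + Y) + i \langle t, Z\rangle}\right] = \phi_{X+Y}(s)\,\phi_Z(t),
\]
and since $X \perp \!\!\! \perp (Y,Z)$ implies in particular $X \perp \!\!\! \perp Y$, we have $\phi_{X+Y}(s) = \phi_X(s)\,\phi_Y(s)$. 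Combining the three identities yields
\[
\phi_X(s)\,\psi(s,t) = \phi_X(s)\,\phi_Y(s)\,\phi_Z(t).
\]

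The final step — and the place where the hypothesis actually bites — is to cancel the factor $\phi_X(s)$. This is legitimate precisely because $\phi_X$ has no zeros, and it yields $\psi(s,t) = \phi_Y(s)\,\phi_Z(t)$ for every $s$ and $t$, i.e.\ $Y \perp \!\!\! \perp Z$, which completes the argument. The only real obstacle is recognizing that the no-zeros assumption is genuinely needed here: without it the cancellation can fail (a L\'evy--Cram\'er-type obstruction), and this is exactly the point of the remark following Theorem \ref{thm: equality}, which observes that the lemma is invoked only with $X$ Gaussian, whose characteristic function is everywhere nonzero. Everything else is routine bookkeeping with characteristic functions of independent random variables.
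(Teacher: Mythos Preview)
Your proof is correct and follows essentially the same route as the paper: reduce to showing $Y \perp\!\!\!\perp Z$, compute $\E[e^{is(X+Y)+i\langle t,Z\rangle}]$ in two ways using the two independence hypotheses, and cancel the nonvanishing factor $\phi_X(s)$. Your write-up is in fact slightly more careful than the paper's, since you spell out explicitly why $X \perp\!\!\!\perp (Y,Z)$ together with $Y \perp\!\!\!\perp Z$ yields full mutual independence.
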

\begin{proof}
	Note that it is enough to show that $Y  \perp \!\!\!  \perp Z$. In terms of characteristic functions, we need to show that $\phi_{Y,Z}(t_1, t_2) = \phi_Y(t_1) \phi_Z(t_2)$ for all $t_1 \in \mathbb R$ and $t_2 \in \mathbb R^n$. Using the fact that $X$ is independent of $(Y,Z)$, we have
	\begin{align*}
	\mathbb E (e^{t_1X+t_1Y+t_2^T Z}) &= \phi_X(t_1) \phi_{Y,Z}(t_1, t_2).
	\end{align*}
	Using the independence of $(X+Y)$ and $Z$, we have
	\begin{align*}
	\mathbb E (e^{t_1X+t_1Y+t2^T Z}) &= \phi_{X+Y}(t_1) \phi_{Z}(t_2)\\
	&\stackrel{(a)}= \phi_X(t_1) \phi_Y(t_1) \phi_Z(t_2),
	\end{align*}
	where $(a)$ follows because $X \perp \!\!\! \perp (Y,Z)$. Since $\phi_X$ has no zeros, we may divide it out and conclude 
	\begin{align*}
	\phi_{Y,Z}(t_1, t_2) = \phi_Y(t_1) \phi_Z(t_2).
	\end{align*}
	This shows that $Y  \perp \!\!\! \perp Z$ and concludes the proof.
\end{proof}

\begin{lemma}\label{lemma: gaussian}
	If $X$ and $Y$ are real-valued random variables such that:
	\begin{enumerate}
		\item $X \sim \cN(0, \sigma_1^2)$
		\item $X  \perp \!\!\! \perp Y$
		\item$X+Y \sim \cN(0 , \sigma_1^2 + \sigma_2^2)$
	\end{enumerate}
	Then $Y \sim \cN(0, \sigma_2^2)$.
\end{lemma}
\begin{proof}
	Since $X$ and $Y$ are independent, we have 
	\begin{align*}
	\mathbb E(e^{tX + tY}) &= \phi_X(t) \phi_Y(t),
	\end{align*}
	and using the definition of $\phi_{X+Y}$, we have
	\begin{align*}
	\mathbb E(e^{tX + tY}) &= \phi_{X+Y}(t).
	\end{align*}
	Since $\phi_X$ has no zeros owing to $X$ being Gaussian, we have 
	\begin{align*}
	\phi_Y(t) = \frac{\phi_{X+Y}(t)}{\phi_X(t)}.
	\end{align*}
	The right hand side is precisely the characteristic function of a $\cN(0, \sigma_2^2)$ random variable. This implies $Y \sim \cN(0, \sigma_2^2)$ and concludes the proof.
\end{proof}

\end{appendix}

\bibstyle{ieeetr}
\bibliography{Ref}

\end{document}